\documentclass[a4paper,cleveref,UKenglish,numberwithinsect,thm-restate]{llncs} 

\usepackage[utf8]{inputenc}
\usepackage{tikz}
\usetikzlibrary{arrows,automata}
\usepackage{amsmath}
\usepackage{todonotes}
\usepackage{xspace}
\usepackage{comment}
\usepackage{algorithm}
\usepackage{listings}
\usepackage{cite}
\usepackage{mathpartir}
\usepackage{semantic}

\usepackage{extarrows}
\usepackage[inline]{enumitem}
\usepackage {stmaryrd}

\usepackage{amssymb}
\usepackage{hyperref}
\usepackage{cleveref}
\usepackage{arydshln}
\def\orcidID#1{\smash{\href{http://orcid.org/#1}{\protect\raisebox{-1.25pt}{\protect\includegraphics{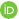}}}}}

\newif\iflong
\longtrue 

\newif\ifdraft
\drafttrue

\renewcommand{\paragraph}[1]{\smallskip\noindent{\textbf{\emph{#1}}}}

\newcommand{\pcmd}{c}

\newcommand{\cmd}{C}
\newcommand{\cmdmap}{\mathcal{C}}

\newcommand{\cmdsteppc}[3]{\xlongrightarrow[]{{#1},{#2},{#3}}}

\newcommand{\En}{\mathit{En}}
\newcommand{\Just}{\mathcal{J}}
\newcommand{\T}{\mathsf{T}}
\newcommand{\run}{\pi}
\newcommand{\tr}{\mu}
\newcommand{\cA}{\mathcal{A}}
\newcommand{\at}[1]{\mathsf{at}\_{#1}}

\newcommand{\inarrT}[1]{\begin{array}[t]{@{}l@{}}#1\end{array}}

\newcommand{\inarr}[1]{\begin{array}{@{}l@{}}#1\end{array}}

\newcommand{\lz}[1]{{\color{green!60!black}{#1}}}
\newcommand{\lo}[1]{{\color{blue!60!black}{#1}}}

\DeclareRobustCommand{\brkstore}{\genfrac[]{0pt}{}}

\newcommand{\imp}{\Rightarrow}
\renewcommand{\implies}{\imp}
\newcommand{\linefill}{\cleaders\hbox{$\smash{\mkern-2mu\mathord-\mkern-2mu}$}\hfill\vphantom{\lower1pt\hbox{$\rightarrow$}}}  
  
\newcommand{\transi}[2]{\mathrel{\lower1pt\hbox{$\mathrel-_{\vphantom{#2}}\mkern-8mu\stackrel{#1}{\linefill_{\vphantom{#2}}}\mkern-11mu\rightarrow_{#2}$}}}

\newcommand{\ntransi}[2]{\mathrel{\lower1pt\hbox{$\mathrel-_{\vphantom{#2}}\mkern-8mu\stackrel{#1}{\linefill_{\vphantom{#2}}}\mkern-8mu\nrightarrow_{#2}$}}}

\newcommand{\textdom}[1]{\ensuremath{\mathsf{#1}}}

\newcommand{\kw}[1]{\textbf{\textdom{#1}}}
\newcommand{\skipc}{\kw{SKIP}}
\newcommand{\ite}[3]{\kw{if}\;#1\:\kw{then}\;#2\; \kw{else}\;#3}

\newcommand{\while}[2]{\kw{while}\;#1\;\kw{do}\;#2}
\newcommand{\sep}{\;\kw{;}\;}

\newcommand{\ALT}{\;\;|\;\;}

\newcommand{\memstate}{{q}}

\newcommand{\epsl}{\varepsilon}
\newcommand{\state}{\kw{q}}

\newcommand{\cs}[1]{\overline{#1}}

\newcommand{\Aux}{{\sf Aux}}

\newcommand{\writeInstn}{\kw{STORE}}

\newcommand{\readInstn}{\kw{LOAD}}

\newcommand{\faddInstn}{\kw{FADD}}

\newcommand{\lQ}{{\mathsf{Q}}}
\newcommand{\linit}{{\mathsf{Q}_0}}
\newcommand{\lTheta}{{\mathbf{\Theta}}}
\newcommand{\M}{\ensuremath{\mathcal{M}}}

\newcommand{\readInst}[2]{#1 \;{:=}\;\readInstn({#2})}
\newcommand{\writeInst}[2]{\writeInstn(#1,#2)}

\newcommand{\assignInst}[2]{#1\;{:=}\;#2}
\newcommand{\faddInst}[3]{#1 \;{:=}\;\faddInstn({#2},{#3})}

\newcommand{\assert}[1]{{\color{blue}{\ensuremath{\left\{
    \begin{array}[c]{@{}l@{}}
      #1
    \end{array}
\right\}}}}}

\newcommand{\vmax}[2]{{#2} \mskip-3mu\uparrow\mskip-2mu {#1}}

\newcommand{\sem}[1]{\llbracket #1 \rrbracket}
\newcommand{\set}[1]{\{{#1}\}}
\newcommand{\tup}[1]{{\langle{#1}\rangle}}
\newcommand{\fv}{\mathit{fv}}

\newcommand{\lR}{{\mathtt{R}}}
\newcommand{\lU}{{\mathtt{RMW}}}

\newcommand{\lTID}{{\mathtt{tid}}}
\newcommand{\lVAL}{{\mathtt{val}}}
\newcommand{\lTS}{{\mathtt{ts}}}
\newcommand{\lFLAG}{{\mathtt{flag}}}
\newcommand{\lLOC}{{\mathtt{loc}}}
\newcommand{\lVIEW}{{\mathtt{view}}}

\crefname{figure}{Fig.}{Figs.}
\Crefname{figure}{Figure}{Figures}

\crefname{theorem}{Theorem}{Theorems}

\crefname{definition}{Def.}{Defs.}
\Crefname{definition}{Definition}{Definitions}

\crefname{example}{Ex.}{Exs.}
\Crefname{example}{Example}{Examples}

\crefname{section}{\S{}}{\S{}}
\crefformat{section}{#2\S{}#1#3}
\Crefformat{section}{#2\S{}#1#3}

\newcommand{\Tid}{\ensuremath{\mathsf{Tid}}}
\newcommand{\Loc}{\ensuremath{\mathsf{Loc}}}
\newcommand{\View}{\ensuremath{\mathsf{View}}}
\newcommand{\Val}{\ensuremath{\mathsf{Val}}}
\newcommand{\Lab}{\ensuremath{\mathsf{Lab}}}

\newcommand{\OLab}{\ensuremath{\mathsf{OLab}}}
\newcommand{\Reg}{\ensuremath{\mathsf{Reg}}}

\newcommand{\btrue}{T}
\newcommand{\bfalse}{F}
\newcommand{\true}{\ensuremath{\mathit{true}}}
\newcommand{\false}{\ensuremath{\mathit{false}}}
\newcommand{\defeq}{\triangleq}
\newcommand{\rlab}[3]{{\lR}^{#1}({#2},{#3})}
\newcommand{\wlab}[3]{{\lW}^{#1}({#2},{#3})}
\newcommand{\rmwlab}[4]{{\lU}^{#1}({#2},{#3},{#4})}

\newcommand{\potstore}{\beta} 

\renewcommand{\L}{L}

\newcommand{\LL}{{\mathcal{{L}}}}
\newcommand{\DD}{{\mathcal{{B}}}} 

\newcommand{\chop}{\mathbin{;}} 

\newcommand{\loc}{{x}}
\newcommand{\loca}{{y}}

\newcommand{\tid}{{\tau}}
\newcommand{\tida}{{\eta}}

\newcommand{\reg}{{a}}

\newcommand{\lab}{{\ell}}
\newcommand{\mlab}{{\iota}} 
\newcommand{\ulab}[4]{{\lU}^{#1}({#2},{#3},{#4})}
\newcommand{\lW}{{\mathtt{W}}}
\newcommand{\valr}{\val_\lR}
\newcommand{\valw}{\val_\lW}

\newcommand{\astep}[1]{\mathrel{\raisebox{-0.8pt}{\ensuremath{\xrightarrow{#1}}}}}

\newcommand{\asteptidlab}[3]{{}\mathrel{\raisebox{-0.8pt}{\ensuremath{\xrightarrow{{#1},{#2}}}}_{#3}}{}}
\newcommand{\asteptidlabpc}[4]{{}\mathrel{\raisebox{-0.8pt}{\ensuremath{\xrightarrow{{#1},{#2},{#3}}}}_{#4}}{}}
\newcommand{\asteplab}[2]{{}\mathrel{\raisebox{-0.8pt}{\ensuremath{\xrightarrow{#1}}}_{#2}}{}}

\newcommand{\msc}{q}

\makeatletter
\newcommand{\vect}[1]{%
  \vbox{\m@th \ialign {##\crcr
  \vectfill\crcr\noalign{\kern-\p@ \nointerlineskip}
  $\hfil\displaystyle{#1}\hfil$\crcr}}}
\def\vectfill{%
  $\m@th\smash-\mkern-7mu%
  \cleaders\hbox{$\mkern-2mu\smash-\mkern-2mu$}\hfill
  \mkern-7mu\raisebox{-3.81pt}[\p@][\p@]{$\mathord\mathchar"017E$}$}
 
\makeatother

\newcommand{\eexp}{E}
\renewcommand{\exp}{e}

\newcommand{\inter}{I}
\newcommand{\sees}[2]{{#1} \!\ltimes\! #2 }
\newcommand{\curr}[1]{\vec{#1}}
\newcommand{\dist}{\mathsf{dist}}

\newcommand{\val}{v}

\newcommand{\makemodel}[1]{\ensuremath{{\mathsf{#1}}}\xspace}

\newcommand{\RA}{\makemodel{RA}}

\newcommand{\TSO}{\makemodel{TSO}}

\newcommand{\SC}{\makemodel{SC}}

\newcommand{\strongC}{\makemodel{StrCOH}}

\newcommand{\pico}{\makemodel{Piccolo}}

\newcommand{\ctid}[1]{\mathtt{T}_#1}
\newcommand{\cloc}[1]{\mathtt{#1}}
\newcommand{\creg}[1]{\mathtt{#1}}
 
\newcommand{\served}{\mathtt{srv}}
\newcommand{\lnext}{\mathtt{nxt}}
\newcommand{\gnext}{\mathtt{next}}
\newcommand{\ego}{\mathsf{ego}}
\newcommand{\tidc}{k}

\newcommand{\glock}{\mathtt{free}}
\newcommand{\llock}{\mathtt{u}}
\newcommand{\gsignal}{\mathtt{sig}}
\newcommand{\lsignal}{\mathtt{s1}}
\newcommand{\lsignalloop}{\mathtt{s2}}

\newcommand{\inv}{\mathcal{I}}

\newcommand{\regstore}{{\gamma}}

\newcommand{\pc}{{l}}
\newcommand{\PC}{\ensuremath{\mathsf{PC}}}
\newcommand{\pcl}{\mathtt{l}}
\newcommand{\pcm}{\mathtt{m}}
\newcommand{\pcmap}{pc}
\newcommand{\npc}{\mathsf{next}}
\newcommand{\npcT}{\mathsf{next^T}}
\newcommand{\npcF}{\mathsf{next^F}}

\newcommand{\cov}[1]{\mathtt{C}(#1)}
\newcommand{\tcovered}{\btrue}
\newcommand{\fcovered}{\bfalse}

\newcommand{\tra}{t}
\newcommand{\vra}{V}
\newcommand{\mra}{m}
\newcommand{\memra}{M}
\newcommand{\tsra}{\nu}
\newcommand{\prop}{\sf prop}

\newcommand{\statera}{q}

\newcommand{\intab}[1]{\begin{tabular}{l}#1\end{tabular}}

\newcommand{\N}{\mathbb{N}}

\newcommand{\map}{map}

\title{Towards Proving Liveness on Weak Memory (Extended Version)\thanks{
Bargmann and Wehrheim are supported by the German Research
Council DFG (project no. 467386514).
}}
\author{Lara Bargmann\orcidID{0009-0004-8778-9098}  \and Heike Wehrheim\orcidID{0000-0002-2385-7512}}
\institute{Carl von Ossietzky Universität Oldenburg, Oldenburg, Germany
\email{lara.bargmann@uol.de, heike.wehrheim@uol.de}
}

\authorrunning{Bargmann and Wehrheim}

\begin{document}
\maketitle

\begin{abstract}
Reasoning about concurrent programs executed on weak memory models is an inherently complex task.
So far, existing proof calculi for weak memory models only cover {\em safety} properties. 
In this paper, we provide the first proof calculus for reasoning about {\em liveness}. Our proof calculus is based on Manna and Pnueli's proof rules for response under weak fairness, formulated in linear temporal logic. Our extension includes the incorporation of {\em memory fairness} into rules as well as the usage of {\em ranking functions} defined over weak memory state. 
We have applied our reasoning technique to the Ticket lock algorithm and have proved it to guarantee starvation freedom under memory models Release-Acquire and StrongCoherence for any number of concurrent threads. 
\keywords{concurrency, weak memory models, liveness, proof calculus}
\end{abstract}

\section{Introduction}
\label{sec:intro}

Verification of concurrent programs is inherently complex due to the fine-grained interleavings of program steps. 
This complexity is further increased when {\em weak memory models} of multiprocessor hardware architectures or modern programming languages need to be taken into account. 
As existing reasoning techniques often assume {\em sequential consistency} (\SC, \cite{DBLP:journals/tc/Lamport79}), they cannot directly be applied in a setting with weak memory. 

Recent years have thus seen numerous approaches transferring verification techniques for concurrent programs to weak memory models.  
These approaches -- automatic and deductive -- almost exclusively concentrate on {\em safety} proofs. Program termination or more generally {\em liveness} has only recently been considered when studying {\em fairness} in weak memory: Both Lahav et al.~\cite{DBLP:journals/pacmpl/LahavNOPV21} and Abdulla et al.~\cite{DBLP:conf/cav/AbdullaAGKV23,DBLP:conf/birthday/AbdullaAGKV24} provide notions of fairness for weak memory that specifically incorporate effects arising in weak memory models. 

Here, we develop the first {\em proof calculus} for reasoning about liveness properties of concurrent programs on weak memory models. More specifically, we start with Manna and Pnueli's (MP's) proof rules for response properties specified in linear temporal logic (LTL). These rules involve a notion of (weak) fairness with respect to program steps. Our first extension of the rules is the incorporation of Lahav et al.'s~\cite{DBLP:journals/pacmpl/LahavNOPV21} notion of {\em memory fairness}. In particular, we allow for giving memory model internal steps as ``helpful transitions" (in Manna and Pnueli terminology) to be treated fairly. 
Our second extension is the usage of {\em ranking functions}  defined over weak memory state. Ranking functions are typically employed in liveness or termination proofs to measure the distance to a target (e.g., end of program). 
 
Instead of defining proof rules for {\em one} fixed memory model, we build on the generic approach of Bargmann et al.~\cite{DBLP:conf/fm/BargmannDW24}, allowing for proofs (potentially) valid in multiple memory models. This approach employs the semantic domain of {\em potentials} for weak memory state, and the logic \pico of Lahav et al.~\cite{DBLP:conf/cav/LahavDW23} to specify assertions over potentials. We extend \pico with further concepts and proof rules to allow reasoning about liveness.  
The thus constructed liveness proofs are then valid for all memory models in which the employed proof rules are sound. To determine rule soundness, the semantic domains of specific memory models need to be mapped onto potentials. 
Here, we provide such mappings and soundness results for the memory models {\em Release-Acquire} (\RA, \cite{DBLP:conf/ecoop/KaiserDDLV17,DBLP:journals/pacmpl/LahavNOPV21}) and Strong Coherence\footnote{Strong Coherence is discussed in the appendix.} (\strongC, \cite{DBLP:journals/pacmpl/LahavNOPV21}).  

Using our \pico extension, the adapted Manna-Pnueli proof rules (in particular rule WELL-JP for {\em parameterized programs}) and our soundness results for proof rules in weak memory models, we prove
starvation freedom of the Ticket lock algorithm~\cite{DBLP:journals/tocs/Mellor-CrummeyS91} on \strongC and \RA for any number of concurrent threads.

\section{Program Syntax and Semantics} 
\label{sec:syntax}
\begin{figure}[t]
  \centering
  \small
  \begin{tabular}{rll@{\ \ \  }rll}
\emph{values} & $\,$ & $\val \in \Val = \set{0,1,\ldots}$ 
&  \emph{locations} & $\,$ & $\loc, \loca \in \Loc = \set{\cloc{x},\cloc{y},\ldots}$
\\ \emph{local registers} & $\,$ & $\reg \in \Reg = \set{\creg{a},\creg{b},\ldots}$ 
&  \emph{thread identifiers} & $\,$ & $\tid \in \Tid = \set{\ctid{0},\ctid{1},\ldots}$
\\ \emph{program counters} & $\,$ & $\pc \in \PC = \set{\pcl_0,\pcl_1, \pcm_0,\ldots}$ 
&  &  & 

\end{tabular}

\[\begin{array}{@{} l l @{}}
\exp ::=  & \reg \ALT \val \ALT \exp + \exp \ALT \exp = \exp \ALT \exp > \exp \ALT \neg \exp \ALT \exp \land \exp \ALT \exp \lor \exp \ALT \ldots
\\[0.5ex]
\pcmd ::=  &
\inarrT{\assignInst{\reg}{\exp} 
\ALT \writeInst{\loc}{\exp}
\ALT \readInst{\reg}{\loc}}  
\ALT \faddInst{\reg}{\loc}{\exp}
 \\[0.5ex]
\cmd ::=  &
\pc: \skipc
\ALT \pc: \pcmd
 \ALT  \cmd \sep \cmd \ALT \pc: \ite{\exp}{\cmd}{\cmd} 
\ALT \pc: \while{\exp}{\cmd} 
\end{array}
\]
\vspace{-10pt}
\caption{Program syntax}
\label{fig:syntax}
\end{figure}

We start by giving the syntax and semantics of concurrent programs on weak memory models. We consider concurrent programs with top-level parallelism only. 
Sequential programs (defined in \cref{fig:syntax}) contain standard programming constructs with named program positions (from a set of program counter values $\PC$, disjoint between threads). 
A mapping $\mathcal{C}$ from thread identifiers to sequential programs describes a concurrent program where $\mathcal{C}(\tid)$ is the sequential program of thread $\tid$ in $\mathcal{C}$.
Threads use disjoint local registers in $\Reg$ and shared locations in $\Loc$. Both take values out of a set $\Val$. In \cref{fig:syntax}, we see statements for storing to and loading from shared locations (write and read instructions), as well as a read-modify-write instruction (RMW) called fetch-and-add (\kw{FADD}).

An example program using this syntax is given in \Cref{fig:waiting}. Here, thread $\ctid{1}$ frees a lock (by writing to $\glock$) and then sends a signal to potentially waiting threads (by writing to $\gsignal$). Thread $\ctid{2}$ is one such thread and loads the values of both shared locations into local registers. If the lock is not free ($\llock\neq 1$), $\ctid{2}$ enters its while loop and waits for the signal. More specifically, thread $\ctid{2}$ waits for the signal {\em to change} by repeatedly reading it and comparing it to $\lsignal$. 

\begin{figure}[t]
  \centering
  \vspace{-10pt}
  \scalebox{0.99}{
  \begin{tabular}[t]{@{}c@{}} 
  $\begin{array}{@{}l@{~}||@{~}l@{}}
    \begin{array}[t]{l}
     \textbf{Thread } \ctid{1}
     \\      
     \pcl_0: \writeInst{\glock}{1};  \\        
     \pcl_1: \writeInst{\gsignal}{1} \\
     \pcl_2:
     \end{array}
    & 
    \begin{array}[t]{l}
     \ \textbf{Thread } \ctid{2}
     \\
      \ \pcm_0: \readInst{\lsignal}{\gsignal}; \\
      \ \pcm_1: \readInst{\llock}{\glock}; \\
      \ \pcm_2: \kw{if}\ (\llock\neq 1) \ \kw{ then }\\
      \qquad \pcm_3: \readInst{\lsignalloop}{\gsignal};\\
      \qquad \pcm_4: \kw{while}\ (\lsignalloop = \lsignal)\ \kw{do} \\        
      \qquad \qquad \pcm_5: \readInst{\lsignalloop}{\gsignal} \\
      \ \pcm_6: 
     \end{array}
     \end{array}$ \\
     ${\color{blue} \Box(\at{\pcm_0} \rightarrow \Diamond \at{\pcm_6})} \qquad \qquad $
  \end{tabular} }
\caption{Concurrent program \texttt{Waiting}; {\color{blue} property} at the end specifying thread $\ctid{2}$ to terminate. 
}
\label{fig:waiting}
\vspace*{-5pt}
\end{figure}

As the semantics of such programs on weak memory models, we give labelled transition systems (LTS) which are constructed in three steps.

\begin{description}[leftmargin=0pt,itemsep=2pt,topsep=2pt]
\item[Local semantics.] We first give a {\em local} semantics (program semantics without the memory model) in   \cref{fig:pcmdsem}. Therein, {\em labels} (of type $\Lab = \{\rlab{}{\loc}{\valr}, \wlab{}{\loc}{\valw}$, $\ulab{}{\loc}{\valr}{\valw}\mid \loc\in\Loc, \valr,\valw\in\Val \}$) for each
  action (read/write/read-modify-write) associated with each command are fixed. Assignments to local registers lead to transitions labelled with $\varepsilon$. The 
  semantics tracks a local register 
  store $\gamma \in \Reg \to \Val$.  Initially, all values of registers are 0.
  Note that in the read rules of \cref{fig:pcmdsem} ($\kw{LOAD}$ and $\kw{FADD}$), the value read is arbitrary.
  In the combined program semantics, this value will be
  fixed by the memory model semantics. 
  
  \Cref{fig:pcmdsem} only details the semantics of simple commands $\pcmd$. The appendix in \Cref{app:opsem} gives the operational semantics of sequential and concurrent programs using the transitions of commands $\pcmd$ ($\pcmd \gg \regstore \astep{\lab_\varepsilon} \regstore'$) as derived in the rules of \cref{fig:pcmdsem} $(\lab_\varepsilon \in \Lab \cup \{\varepsilon\})$.
  Therein, the state of programs also includes a program counter map $\pcmap \in \Tid \to \PC$ (initially being the first program position for every thread). The map $\pcmap$ is used for building assertions over program positions (like $\at{\pcm_0}$ occurring in the property of \cref{fig:waiting}). We assume a function $\npc$  giving us the position of the next program instruction.  Altogether, the local semantics fixes an LTS for a concurrent program $\mathcal{C}_0$ with states of the form $\langle \mathcal{C}, (\gamma,\pcmap)\rangle\in\mathcal{C}_0.\lQ$ (remaining concurrent program $\mathcal{C}$ plus register and program counter map), initial states $\mathcal{C}_0.\lQ_0 = \set{\langle \mathcal{C}_0, (\regstore_0,\pcmap_0)\rangle}$ and the transitions $\rightarrow$ of Appendix~\ref{app:opsem}.

\item[Weak memory semantics.] The semantics of weak memory models is also an LTS, $\M$, with a set of states denoted
  by $\M.\lQ$, initial states $\M.\linit$, and transitions  
  $\mathop{\asteplab{}{\M}}$. Transition labels of $\M$ consist of
  observable program transition labels (elements of $\Tid \times
  (\Lab \cup \{\epsl\})$) 
  and a (disjoint) set $\M.\lTheta$ of unobservable {\em internal} memory labels.  As an
  example, we present the \SC memory model below. 
  The \RA model is presented in Fig~\ref{fig:RA} and the \strongC model in the appendix.  
  \begin{example}[\SC memory model]\label{sec:semantics}
    The memory model $\SC$ simply tracks the most recent value written
    to each variable (plus the id of the writing thread), 
    i.e., $\SC.\lQ \defeq \Loc \to (\Val \times \Tid)$. $\SC$ has no
    internal memory steps (i.e.,
    $\SC.\lTheta \defeq \emptyset$), and the initial state is defined
    by $\SC.\linit \defeq \set{\lambda \loc \ldotp \tup{0,\ctid{0}}}$ (storing initial value plus 
    $\ctid{0}$ as the initializing thread), and transitions
    are 
    given by: 
  \begin{mathpar}
   \small
    \inferrule*[left=write]{
      \lab=\wlab{}{\loc}{\valw} \\
      \msc'=\msc[\loc \mapsto \tup{\valw,\tid}]  
    }{{\msc} \asteptidlab{\tid}{\lab}{\SC} {\msc'} }
    \qquad 
    \inferrule*[left=read]{
      \lab=\rlab{}{\loc}{\valr} \\
      \msc(\loc)=\tup{\valr,\cdot} \\
    }{{\msc} \asteptidlab{\tid}{\lab}{\SC} {\msc} }
  \end{mathpar}
The semantics of the read-modify-write corresponds to a read and a write in one atomic step. 
  \end{example}
  
\item[Combined program semantics.] The two semantics are combined into a joint semantics $\cs{\M}$ using
  the three rules below for steps corresponding to (1) joint executions of reads, writes and RMWs, (2) steps of the program only and (3) internal steps of the memory model $\M$ only: 
  \begin{mathpar}
{\small
\inferrule*[left=(1)]{
\tup{\cmdmap,(\regstore,\pcmap)} \cmdsteppc{\tid}{\lab} {\pc} \tup{\cmdmap',(\regstore',\pcmap')}
\\\\
\lab \in \Lab \\\memstate {\asteptidlab{\tid}{\lab}{\M}} \memstate'
}{\tup{\cmdmap,(\regstore,\pcmap),\memstate} \asteptidlabpc{\tid}{\lab}{\pc}{\cs{\M}} \tup{\cmdmap',(\regstore',\pcmap'),\memstate'}} 
\and 
\inferrule*[left=(2)]{
\tup{\cmdmap,(\regstore,\pcmap)} \cmdsteppc{\tid}{\epsl}{\pc} \tup{\cmdmap',(\regstore',\pcmap')}}
{\tup{\cmdmap,(\regstore,\pcmap),\memstate} \asteptidlabpc{\tid}{\epsl}{\pc}{\cs{\M}} \tup{\cmdmap',(\regstore',\pcmap'),\memstate}} } 
\end{mathpar}
{\small \begin{mathpar} 
\inferrule*[left=(3)]{
\mlab \in \M.\lTheta  \\ 
\memstate \asteplab{\mlab}{\M} \memstate' 
}{\tup{\cmdmap,(\regstore,\pcmap),\memstate} \asteplab{\mlab}{\cs{\M}} \tup{\cmdmap,(\regstore,\pcmap),\memstate'}}
\end{mathpar}}
\end{description}

\medskip 

\noindent In summary, for a given concurrent program $\mathcal{C}_0$ and memory model $\M$, we thus get an LTS $\cs{\M}=(\lQ,\OLab,\lTheta,\linit,\T)$ of states $\lQ=\mathcal{C}_0.\lQ\times\M.\lQ$, observable labels $\OLab = \Tid \times
  (\Lab \cup \{\epsl\}) \times \PC$ and unobservable labels $\lTheta=\M.\lTheta$, 
  initial states $\linit=\mathcal{C}_0.\linit\times\M.\linit$, and transitions $\T= \mathord{\to_{\cs{\M}}}$ (out of $\lQ \times (\OLab \cup \lTheta) \times \lQ$) defined by the rules (1), (2) and (3). 
  
\begin{figure}[t]
\scalebox{0.85}{
\begin{mathpar}
\small
\inferrule*{
\regstore'=\regstore[\reg \mapsto \regstore(\exp)]
}{
\assignInst{\reg}{\exp} \gg \regstore \astep{\varepsilon} \regstore'
}
\quad
\inferrule*{
\lab =\wlab{}{\loc}{\regstore(\exp)} \\
}{
\writeInst{\loc}{\exp} \gg    \regstore \astep{\lab} \regstore
}
\quad
\inferrule*{
\lab =\rlab{}{\loc}{\valr} \\\\
\regstore'=\regstore[\reg \mapsto \valr]
}{
\readInst{\reg}{\loc} \gg   \regstore \astep{\lab} \regstore'
}
\quad 
\inferrule*{
\lab =\ulab{}{\loc}{\valr}{\valr+\regstore(\exp)} \\\\
\regstore'=\regstore[\reg \mapsto \valr]
}{
\faddInst{\reg}{\loc}{\exp} \gg    \regstore \astep{\lab} \regstore'
}
  \\
\end{mathpar}}
\caption{Local semantics of  commands ($\lab \in \Lab$);
missing cases to be found in \cref{app:opsem}.
}
\label{fig:pcmdsem}
\end{figure}

\section{Fairness, Linear Temporal Logic and a Response Rule}
\label{sec:fair}
We employ linear temporal logic (LTL) for the specification of liveness properties. LTL formulae are typically interpreted over infinite sequences of states. Thus, we assume all states in our LTS to have successors, i.e., for every state $\state \in \lQ$ there has to be a transition $\tr \in \T$ with $\tr=(\state,\cdot,\cdot) \in \T$\footnote{When the semantics of programs derives states without successors, we add an idling transition $(\state,\varepsilon,\state)$ to $\T$.}.  

 \textbf{Fairness.} Fairness is defined on runs of LTS'. A {\em run} $\run$ of an LTS $T=(\lQ,\OLab,\lTheta,\linit,\T)$ is an infinite sequence of transitions $(\state_0,\alpha_0,\state_1) (\state_1,\alpha_1, \state_2) \ldots$ such that $\state_0 \in \linit$ and $(\state_i,\alpha_i,\state_{i+1}) \in \T$, $i \geq 0$. 
The {\em computation} $\sigma$ of a run $\run = (\state_0,\alpha_0,\state_1) (\state_1,\alpha_1, \state_2) \ldots$ is the infinite sequence of states $\sigma = \state_0 \state_1 \state_2 \ldots$. 
A transition $\tr = (\state, \alpha, \state')$ is {\em enabled at position $k$} in $\run$ if $\run[k]=(\state,\cdot,\cdot)$, and $\tr$ is {\em continuously enabled from position $k$ on} if $\tr$ is enabled at all positions $i\geq k$. A transition $\tr \in \T$ is {\em taken at position $k$} in a run $\run$ if $\run[k]=\tr$.  

\begin{definition}\label{def:fair}
    Let $T=(\lQ,\OLab,\lTheta,\linit,\T)$ be a labeled transition system,  $\Just \subseteq \T$ a set of transitions (the {\em just}  transitions). A run $\run$ of $T$ is {\em $\Just$-fair} if every transition $\tr \in \Just$ which is continuously enabled from a position $k$ on is taken at some position $i \geq k$.  
\end{definition}

In the sequel, we identify transitions of program commands by their positions $\pc \in \PC$, i.e., we will also say that a command at position $\pc$ is enabled or taken when a transition $\mu = (\cdot,(\tid,\lab,\pc),\cdot)$ for some $\tid$ and $\alpha =(\tid,\lab,\pc) $ is enabled or taken. 
In particular, a command of thread $\tid$ at position $\pc$ is enabled, $\En(\pc)$, if $\pcmap(\tid)=\pc$. 
Note that besides program transitions, we also have {\em internal} transition of the memory model in an LTS. 
These memory internal transitions are transitions with labels $\mlab\in\lTheta$ (and we identify them with their label $\mlab$).
By allowing them to be part of just sets $\Just$, we can express \textit{memory fairness}, i.e., define runs being fair wrt.~internal memory transitions. 

  \textbf{Linear Temporal Logic.} For LTL, we assume to have a logic at hand for specifying {\em atomic propositions} (i.e., predicates) $\varphi$ describing sets of states. In \cref{sec:pico}, we will introduce the logic \pico for this purpose. Its interpretation fixes the meaning of $\state \models \varphi$ for states $\state$. On top of \pico, we build linear temporal logic formulae using operators $\Box$ (always) and $\Diamond$ (eventually). In our proof rules, we only consider formulae of the form $\Box (\varphi \rightarrow \Diamond \psi)$ ({\em response} properties) where $\varphi$ and $\psi$ are state formulae in \pico. 
  
\begin{definition} 
Let $\run$ be a run of an LTS $T$ and $\sigma$ its corresponding computation. 
For LTL formula $\Psi$ and \pico formula $\varphi$, we define $(\sigma,k) \models \Psi$  as follows: 
\begin{itemize}
  \item  $(\sigma,k) \models \varphi$  ($\varphi$ an atomic propositions) iff $\sigma[k] \models \varphi$ (see \cref{def:pico}), 
  \item $(\sigma,k) \models \Box\ \Psi$ iff $(\sigma,i) \models \Psi$ for all $i \geq k$, 
  \item $(\sigma,k) \models \Diamond \Psi$ iff $(\sigma,i) \models \Psi$ for some $i \geq k$.  
\end{itemize}
An LTS $T$ {\em satisfies} $\Psi$, $T \models \Psi$ , if $(\sigma,0) \models \Psi$   for all computations $\sigma$ of runs of $T$; $T$ {\em fairly satisfies} $\Psi$ with respect to some set of just transitions $\Just$ if $(\sigma,0) \models \Psi$ for all computations $\sigma$ of $\Just$-fair runs of $T$. 
\end{definition} 

The property given at the end of the program in \cref{fig:waiting} is a response formula 
specifying that thread $\ctid{2}$ will eventually reach position $\at{\pcm_6}$, i.e., will terminate.

\begin{figure}[t] \scalebox{.99}{
    \begin{mathpar}
    \inference{\textbf{JW1}.\ \varphi \rightarrow \bigvee_{j=0}^{m} \varphi_j \hfill \\  
    \textbf{JW2}.\ \assert{\varphi_i \wedge  \delta_i = z_i} \ \tr  \ \assert{\big(  \bigvee_{j=0}^{m} (\varphi_j \wedge z_i \succ \delta_j)\ \big) \vee (\varphi_i \wedge z_i = \delta_i) }  \\
    \textbf{JW3.}\ \assert{\varphi_i \wedge   \delta_i = z_i} \ \tr_i \ \assert{ \bigvee_{j=0}^{m} (\varphi_j \wedge z_i \succ \delta_j)} \hfill \\
    \textbf{JW4.} \ \varphi_i \rightarrow \En(\tr_i)\hfill }{\Box(\varphi \rightarrow \Diamond \psi)}
    \end{mathpar}}
    \caption{Rule {\textsc WELL-J} with assertions $\varphi, \psi$, $\varphi_0, \ldots, \varphi_m$ ($\varphi_0=\psi$), 
    $\tr \in \T$ any transition, helpful transitions $\tr_1, \ldots, \tr_m \in \Just$, ranking functions $ \delta_0, \ldots, \delta_m$, $z_i$ meta variables for elements in $\cA$ and $i$ ranging over $1, \ldots, m$.}
    \label{fig:well-j}
\end{figure}

  \textbf{A Proof Rule for Response.}
Manna and Pnueli introduce a number of proof rules for response formulae. Here, we employ the rule WELL-J (for well-founded response under justice) in which all premises of the rule are non-temporal (see \cref{fig:well-j}). We give the premises \textbf{JW2} and \textbf{JW3}  in the form of {\em Hoare triples}~\cite{DBLP:journals/cacm/Hoare69}. 
The rule WELL-J builds on a {\em well-founded domain} $(\cA,\succ)$: $\cA$ is a set and $\succ$ is a transitive and irreflexive relation such that no infinitely descending sequences of elements $a_1 \succ a_2 \succ a_3 \succ \ldots$ exist. We often employ well-founded domains which are {\em products} of other well-founded domains, then using a lexicographic ordering. For measuring distance to targets, rule WELL-J uses {\em ranking functions}  $\delta_i: \lQ \rightarrow \cA$. While trying to reach the target, the program might pass through a number of intermediate states, represented by state assertions $\varphi_0, \ldots, \varphi_m$. 
Progress towards the target is achieved by executing {\em helpful transitions} $\tr_1, \ldots, \tr_m$ which need to be in the justice set $\Just$. 
More specifically, rule WELL-J in Fig.~\ref{fig:well-j} has four  premises (for proving $\Box(\varphi \rightarrow \Diamond \psi)$):
\begin{description}
    \item[JW1]  states that the ``start" property $\varphi$ of the response formula implies one of the intermediate assertions $\varphi_j$. 
    \item[JW2] states that for all intermediate assertions $\varphi_i$ and all transitions $\tr \in \T$, the execution of $\tr$ either leads to states satisfying some $\varphi_j$, thereby decreasing the ranking function, or stays inside $\varphi_i$ and keeps the value of $\delta_i$. 
    \item[JW3]  states executions of helpful transitions $\tr_i$   to definitely decrease ranking functions.
    \item[JW4]  requires showing that helpful transition $\tr_i$ is enabled in states satisfying intermediate assertion $\varphi_i$. 
\end{description}

 This proof rule is sound, i.e., if all four premises hold, an LTS fairly satisfies the response property with respect to $\Just$. 

 \begin{lemma}[Manna\&Pnueli~\cite{DBLP:conf/birthday/MannaP10}]
     Rule WELL-J is sound. 
 \end{lemma}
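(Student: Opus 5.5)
We argue by contradiction. Fix a $\Just$-fair run $\run$ with computation $\sigma$ and a position $k$ with $(\sigma,k)\models\varphi$, and suppose that $\psi=\varphi_0$ holds at no position $i\geq k$. By \textbf{JW1}, some $\varphi_j$ holds at position $k$. The plan is to attach to every position $i\geq k$ an index $j_i$ and a rank value, and to show that these ranks form a non-increasing sequence in $\cA$ that must drop infinitely often. That contradicts well-foundedness of $(\cA,\succ)$.

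We first show by induction on $i\geq k$ that at every position some $\varphi_j$ with $j\geq 1$ holds. At each position we choose one such index $j_i$; at $i=k$ it comes from \textbf{JW1}, and $j_k\geq 1$ because $\psi$ fails. For the inductive step, let $\tr=\run[i]$ and $z=\delta_{j_i}(\sigma[i])$, and apply \textbf{JW2} to $\tr$. Two cases arise. Either $\sigma[i+1]\models\varphi_{j'}$ with $z\succ\delta_{j'}(\sigma[i+1])$; since $\psi$ fails, $j'\geq 1$, and we set $j_{i+1}=j'$ and obtain a strict decrease. Or $\sigma[i+1]\models\varphi_{j_i}$ and $\delta_{j_i}(\sigma[i+1])=z$; then we set $j_{i+1}=j_i$ and the rank is unchanged. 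Writing $r_i=\delta_{j_i}(\sigma[i])$, we get $r_i\succeq r_{i+1}$ for all $i\geq k$, where $\succeq$ means $\succ$ or $=$. Moreover, whenever the index changes, the decrease is strict.

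By well-foundedness (using transitivity of $\succ$), only finitely many strict decreases can occur. Hence there is a position $n\geq k$ from which $j_i=m^*$ and $r_i=r_n$ are constant. So $\varphi_{m^*}$ holds at all positions $i\geq n$. By \textbf{JW4}, the helpful transition $\tr_{m^*}$ is then enabled at every position $i\geq n$, i.e.\ it is continuously enabled from $n$ on. Since $\tr_{m^*}\in\Just$ and $\run$ is $\Just$-fair, $\tr_{m^*}$ is taken at some position $i\geq n$. Applying \textbf{JW3} at this position, $\sigma[i+1]\models\varphi_{j'}$ for some $j'$ with $r_i\succ\delta_{j'}(\sigma[i+1])$. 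Here $j'\neq 0$ because $\psi$ never holds, so the rank strictly decreases from $r_n$. This contradicts constancy, provided we re-choose $j_{i+1}=j'$ at that step; the induction of the previous paragraph permits this choice.

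The main obstacle is the bookkeeping needed to make this last step rigorous. The choice of $j_i$ is not canonical, and the choices made by \textbf{JW2} and \textbf{JW3} must fit into a single sequence. To handle this, I would define the sequence so that it always prefers a strict-decrease witness whenever one exists: at a transition taken as $\tr_{j_i}$, the index is updated using \textbf{JW3}. Under this convention, every occurrence of the helpful transition $\tr_{j_i}$ at a position with index $j_i$ forces a strict drop. Eventual constancy together with fairness then yields the contradiction. A second subtlety is that "enabled" refers to a concrete transition $\tr$ with a fixed source state. \textbf{JW4} must therefore be read as enabledness of the helpful transition family, for example "the command at position $\pc$ is enabled", with justice taken with respect to that family. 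I would make this identification explicit, following the paper's convention of identifying transitions with program positions or internal labels.
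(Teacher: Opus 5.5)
Your proof is correct and is the standard Manna--Pnueli argument that the paper relies on by citation (it gives no proof of its own). You assume $\psi$ never holds, use \textbf{JW2} to get a non-increasing rank sequence, let well-foundedness make it stabilize, and then use \textbf{JW4}, justice and \textbf{JW3} to force one more strict decrease. Fixing the ``prefer a strict-decrease witness'' convention when the sequence is first built, rather than re-choosing afterwards as your third paragraph phrases it, is what makes that last step non-circular. Reading $\En$ in \textbf{JW4} as enabledness of the helpful-transition family, with justice taken for that same family, is the right interpretation.
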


We can employ this rule to prove the response property in program \texttt{Waiting} (Fig.~\ref{fig:waiting}) under \SC, i.e., prove termination of thread $\ctid{2}$. We see that $\ctid{2}$ either terminates when it does not enter the ``then"-branch of the if statement at all ($\llock=1$), or when the while loop terminates ($\lsignal\neq \lsignalloop$). One of these situations will occur in \texttt{Waiting} since $\gsignal\neq 0\Rightarrow \glock=1$ is an invariant. We employ this invariant when defining the intermediate assertions for WELL-J.
More specifically, we need different intermediate assertions for each position of thread $\ctid{2}$ ($\at{\pcm_0},$ $\at{\pcm_1},$ $\at{\pcm_2},$ $\at{\pcm_3},$ $\at{\pcm_{4}},$ $\at{\pcm_{5}},$ and $\at{\pcm_6}$), each with 
their related helpful transition in $\ctid{2}$. For example, the assertion 
$\at{\pcm_1}\land (\lsignal=0\lor \glock=1)$
has the helpful transition $\pcm_1$. 
Additionally, while $\ctid{2}$ is in its while loop ($\at{\pcm_{4,5}}$), we track the progress of $\ctid{1}$ to ensure that we will eventually leave the loop. This leads to a total of nine state assertions $\varphi_i$ with ranking functions $\delta_i: (i, 1-\gsignal,1-\lsignalloop)$, where $i$ tracks the progress of $\ctid{2}$ to enter and leave its loop, $1-\gsignal$ tracks the progress of $\ctid{1}$ writing to $\gsignal$ and $1-\lsignalloop$ tracks whether $\ctid{2}$ has read the necessary value to leave its loop. 
Basically, all transitions are helpful here since they participate in achieving the progress of $\ctid{2}$.

This proof, however, does not hold for programs under weak memory, as threads then might not see the {\em same} values of shared locations like $\glock$ at the same time. This, in particular, means that state assertions like $\gsignal\neq 0\Rightarrow \glock=1$ are not well defined. 
Nevertheless, rule WELL-J is sound in such a setting, as we still simply work on LTS of programs, albeit with ``unusual" state and ``unusual" memory model internal steps. What we need to properly employ this rule in our setting is a way of defining (1) intermediate assertions as well as (2) ranking functions in terms of weak memory state. 

\section{Potentials and the Piccolo Logic} \label{sec:pico}

For sequential consistency, we use predicates on locations and registers for specifying state assertions. 
This is different when we consider weak memory models, in particular when  
constructing proofs that are potentially valid for multiple memory models. 
Here,  we employ 
\pico~\cite{DBLP:conf/cav/LahavDW23,DBLP:conf/fm/BargmannDW24} for this purpose. \pico 
is an interval-based logic for weak
memory formalized using a notion of {\em potentials}.

\noindent {\textit {Notation.}}   For sets $S$ and lists $L$, we write $\#S$ and $\#L$ for the size and length of $S$, $L$, respectively; $L[i]$, $1 \leq i \leq \#L$, is the $i$-th element in $L$; $L_1 \cdot \L_2$ is the concatenation of lists.  

 \textbf{Potential domain.} 
Potentials~\cite{DBLP:conf/cav/LahavDW23,DBLP:journals/toplas/LahavB22} reflect the phenomenon of weak memory models that threads might still see {\em stale} values of shared locations and thus may read from several possible
writes to a location, stale ones, and more recent ones.  

Potentials reflect this by describing {\em views} of threads which -- contrary to \SC\ -- 
differ between threads. 
Weak memory models typically have memory model internal steps (like flushes) which {\em advance} these views. For liveness proofs, we need to be able to reason about such view advancements. 
For example, for termination in program \texttt{Waiting} it is crucial that thread $\ctid{2}$ eventually sees the latest values of $\glock$ and $\gsignal$ which thread $\ctid{1}$ has written. 

Technically, a potential store is a mapping from locations to tuples of values, thread IDs (of the writer), {\em covered flags} (needed for RMWs), plus some auxiliary information
required by specific memory
models. This auxiliary information differs between memory models: \SC
requires no additional auxiliary information,
while \RA  and \strongC keep track of {\em timestamps} (see \cref{sec:lifting} and appendix).

\begin{definition}
\label{def:pot-stores} 
A \emph{potential store} is a function
$\potstore : \Loc \to \Val \times \Tid \times   \{\btrue,\bfalse\} \times \Aux$, where $\Aux$
captures the auxiliary information.
\end{definition}
We use $\potstore(\loc).\lVAL$, $\potstore(\loc).\lTID$, $\potstore(\loc).\lFLAG$ and (potentially) $\potstore(\loc).\lTS$ to retrieve
the value, thread id, covered flag, and auxiliary timestamp of $\potstore(\loc)$,
respectively. 

Potentials are then sequences (lists) of potential stores.

\begin{definition}\label{def:sra-state} 
A \emph{potential} is a non-empty set of lists of potential stores.  We let $\LL$ be the set of all potentials. 
A \emph{potential mapping}  is a partial function $\DD : \Tid \rightarrow \LL$ 
that maps thread identifiers to potentials such that all lists agree on the last store.

\end{definition}

\begin{example} \label{ex:pot} 
  Consider the following two lists of potential stores
   \[ \L_1={ {\brkstore{\glock\mapsto \tup{0,\ctid{0}, \tcovered}}{\gsignal
      \mapsto \tup{0,\ctid{0}, \tcovered}}\cdot \brkstore{\glock\mapsto
      \tup{1,\ctid{1}, \tcovered}}{\gsignal \mapsto \tup{0,\ctid{0},
        \tcovered}}} } \qquad   \L_2={  { \brkstore{\glock\mapsto
      \tup{1,\ctid{1}, \tcovered}}{\gsignal \mapsto \tup{0,\ctid{0},
        \tcovered}}} } \]
  and the program \texttt{Waiting} in Fig.~\ref{fig:waiting}. In the program, the initialization of locations to value 0 is done by a specific thread $\ctid{0}$. After executing
  instruction $\pcl_0$ of $\ctid{1}$, thread $\ctid{2}$ could have the
  potential $\DD(\ctid{2})=\{\L_1\}$. 
  In this, $\ctid{2}$ 
  currently sees the {\em stale} value 0 of $\glock$ and value 0 of $\gsignal$ . The potential, however, also includes the latest values of $\glock$, as represented by the second potential store in the list. 

  In the future (e.g., after some memory model internal steps), 
  $\ctid{2}$'s potential might only include 
  the  list $L_2$ in which it will see $\glock$ to be $1$ while $\gsignal$ is still 0. Note that the covered flag $\tcovered$ is only relevant for programs containing RMWs and can therefore be ignored here. 
\end{example}

\begin{figure}[t] \small
$\begin{array}{@{} l @{\quad }r l l l @{}}
\text{\em extended expr.} &  
& \eexp & ::=  & \exp \ALT \loc  \ALT \curr{\loc} \ALT \eexp + \eexp 
                            \ALT \eexp -  \eexp \ALT \neg \eexp \ALT \eexp \wedge \eexp \ALT \ldots \\ 

\text{\em interval assertions} & 
& \inter & ::= & [\eexp] \ALT \inter \chop \inter\\  
\text{\em assertions} & 
& \varphi
& ::= & \sees{\tid}{\inter} \ALT \cov{\loc} \ALT 
\dist(\tid,\loc)=n \ALT \at\pc \ALT \exp \ALT \varphi \land \varphi \ALT \ldots 
 \end{array}$
 \vspace{-2pt}
 \caption{Assertions of \pico (excerpt)}
\label{fig:logic} 
\end{figure}

\smallskip 
 \textbf{Logic \pico.} The logic \pico (given in \cref{fig:logic}\footnote{We have elided constructs occurring in~\cite{DBLP:conf/fm/BargmannDW24} which are not needed here.}) is interpreted over potentials.  
The key concept of \pico are {\em interval assertions}: 
a potential list fulfils an interval assertion $[\eexp]$ when all elements in the list satisfy
$\eexp$, and a list $\L$ satisfies $[\inter_1] \chop [\inter_2]$
(where $\chop$ is the {\em chop} operator~\cite{DBLP:journals/corr/abs-1207-3816,DBLP:journals/ipl/ChaochenHR91})
if $\L$ can be split into lists $\L_1$ and $\L_2$ such that $\L_1$ satisfies $[\inter_1]$ and $\L_2$ satisfies $[\inter_2]$. This allows us to write assertions on values of shared locations currently seen by a thread as well as seen in the future. 

Here, we build on the version of \pico in~\cite{DBLP:conf/fm/BargmannDW24} and extend it with three constructs: 
(a) 
we introduce the {\em current} (most recent/last) value $\curr{\loc}$ of a location $\loc$,
(b) introduce the distance function $\dist$ to describe the distance in the potential to the  most recently written value of a location, and 
(c) introduce the {\em covered} predicate  $\cov{\loc}$ for reasoning about RMWs. Intuitively, locations are covered when an RMW has to read from the most recent value of a location.

We use $\vmax{\loc}{\tid}$ as a shorthand for $\dist(\tau,\loc) = 0$ to stand for {\em view maximality}, specifying a thread $\tid$ to only see the most recent value of a location $\loc$. 
Note that we can thereby also state  {\em negated}  view-maximality (by writing $\dist(\tau,\loc) = n$ for some $n \neq 0$).  
In prior works, such negations have been disallowed because they impact the {\em stability} of assertions: An assertion is said to be stable if its validity cannot be changed by memory model internal steps. This is of key importance for safety proofs where we only reason about program steps. For liveness, this is different: here, we want to explicitly reason about memory model internal steps and specify that their execution changes properties, namely the distance of a thread towards observing the most recent value of a shared location. 

Next, we define the interpretation of $\pico$ on the domain of potentials.

\begin{definition} \label{def:pico}
Let
$\regstore$ be a register store, $\pcmap$  a program counter map,
$\potstore$ a potential store, 
$\L$ a store list,
and $\DD$ a potential mapping.  
For an expression $\exp$, we let $\sem{\exp}_{\tup{(\regstore,\pcmap),\potstore,\L}} \defeq \regstore(\exp)$,  
$\sem{\loc}_{\tup{(\regstore,\pcmap),\potstore,\L}} \defeq \potstore(\loc).\lVAL$ 
and $\sem{\curr{\loc}}_{\tup{(\regstore,\pcmap),\potstore,\L}} \defeq \L[\#\L](\loc)$\footnote{Note again that in potential domains all lists have to agree on the very last store.}. 
The extension of this notation to any extended expression $\eexp$ is standard.
The validity of state assertions $\varphi$ in $\tup{(\regstore,\pcmap),\DD}$, denoted by $\tup{(\regstore,\pcmap),\DD} \models \varphi$, is defined as follows:
\begin{enumerate}[topsep=2pt]

\item $\tup{(\regstore,\pcmap),\L} \models {[\eexp]}$ if $\sem{\eexp}_{\tup{(\regstore,\pcmap),\potstore,\L}} = \true$ for every $\potstore \in \L$. 

\item $\tup{(\regstore,\pcmap),\L} \models {\inter_1 \chop \inter_2}$ if
     $\tup{(\regstore,\pcmap),\L_1}  \models \inter_1$
          and
       $\tup{(\regstore,\pcmap),\L_2}  \models \inter_2$
           for some (possibly empty)
  $\L_1$ and $\L_2$ such that $\L=\L_1 \cdot \L_2$.

 \item $\tup{(\regstore,\pcmap),\DD} \models \sees{\tid}{\inter}$ if 
   $\tup{(\regstore,\pcmap),\L} \models \inter$ for every $\L \in \DD(\tid)$. 

    \item $\tup{(\regstore,\pcmap), \DD} \models \cov{\loc}$ if 
          $\L[i](\loc).\lFLAG=\tcovered$ for every $\L \in \DD(\tid), \tid\in\Tid, 1 \leq i \leq \# \L$.

 \item $\tup{(\regstore,\pcmap), \DD} \models\dist(\tid,\loc)=n$ if $\max \{k \mid k = \# \{ \L[i](\loc) \mid \L[i](\loc)\neq\L[\#\L](\loc) \},  \L \in \DD(\tid) \}=n$.

 \item $\tup{(\regstore,\pcmap),\DD} \models \at\pc$ if $\pcmap(\tid)=\pc$ for some $\tid\in\Tid$\footnote{Note that program positions are disjoint between threads so that there is at most one such thread here.}, and $\tup{(\regstore,\pcmap),\DD} \models \exp$ if $\regstore(\exp)=\true$. 
 
   \item $\tup{(\regstore,\pcmap), \DD} \models \varphi_1 \land \varphi_2$ if $\tup{(\regstore,\pcmap), \DD} \models \varphi_1$ and $\tup{(\regstore,\pcmap), \DD} \models \varphi_2$. 
  \end{enumerate}
\end{definition}

\begin{example}
    In \cref{ex:pot} we had the following list of potential stores of thread $\ctid{2}$: 
    \[\L_1={ {\brkstore{\glock\mapsto \tup{0,\ctid{0}, \tcovered}}{\gsignal
      \mapsto \tup{0,\ctid{0}, \tcovered}}\cdot \brkstore{\glock\mapsto
      \tup{1,\ctid{1}, \tcovered}}{\gsignal \mapsto \tup{0,\ctid{0},
        \tcovered}}} }\]
     Assuming this to be the only list in $\DD(\ctid{2})$, this state satisfies $\dist(\ctid{2},\glock)=1$,
     $\sees{\ctid{2}}{[\glock=0] \chop [\glock=1]}$, $\sees{\ctid{2}}{[\gsignal=0]}$,
      $\neg (\vmax{\glock}{\ctid{2}})$ and $\cov{\glock}$. The value of $\overrightarrow{\glock}$ is 1. 
\end{example}

  \textbf{Proof Rules.} 
\pico already comes with a set of base proof rules for Hoare triples~\cite{DBLP:journals/cacm/Hoare69}. So far, these have been used to build Owicki-Gries-like proof outlines~\cite{DBLP:journals/acta/OwickiG76} for safety proofs. Here, we employ the Hoare triples to show (part of) the premises of rule WELL-J. Fig.~\ref{fig:rules} gives some proof rules needed for our examples; further rules, in particular about interval formulae, can be found in~\cite{DBLP:conf/fm/BargmannDW24}. We use the following notation: 
For an assertion $\varphi$, we let $\fv(\varphi) \subseteq \Reg \cup \Loc$ be the set of registers and 
locations occurring in $\varphi$, and we use $\psi \notin \varphi$ to say that formula $\psi$ is not occurring as a subterm in $\varphi$. We will shortly discuss all the rules.

\begin{figure}[t]
\centering
\begin{small}
\scalebox{0.87}{
\begin{tabular}[t]{|l|r@{}c@{}l|l@{}|}
\hline
Assumption & Pre & Command & Post & Reference  \\
 \hline
 
$\loc \notin \fv(\varphi), \at{\pc}\notin\varphi$ & $\assert{\varphi}$ & $\tid \mapsto \pc: \writeInst{\loc}{\exp}$ & $\assert{\varphi}$ & {\sc Stbl-St} \\

\hdashline

\parbox{2.7cm}{$\inarr{\loc, \reg \notin \fv(\varphi), }$ \\
  $\at{\pc}\notin\varphi$,\\
  $\forall\loca:\neg (\vmax{\loca}{\tid}) \notin \varphi$}
  & $\assert{\varphi}$ & $\tid \mapsto \pc: \faddInst{\reg}{\loc}{\exp}$ & $\assert{\varphi}$ & {\sc Stbl-Rmw}\,  \\

  \hdashline

\parbox{2.7cm}{\vspace{0.2em}$\reg \notin \fv(\varphi), \at{\pc}\notin\varphi,$\\$\forall\loca:\neg (\vmax{\loca}{\tid}) \notin \varphi$} & $\assert{\varphi}$ & $\tid \mapsto \pc: \readInst{\reg}{\loc}$
& $\assert{\varphi}$ & {\sc Stbl-Ld} \\

\hline 
 $\at{\pc}\notin\varphi$& $\assert{\varphi(\reg:=\exp)}$ & $\tid \mapsto \pc: \reg:=\exp$ & $\assert{\varphi}$ & {\sc Sub} \\
 & $\assert{\at{\pc}}$ & $\tid \mapsto \pc: \cmd$ & $\assert{\at{\npc(\pc)}}$ & {\sc Pc} \\
 
\hline

 & $\assert{\vmax{\loc}{\tid}}$ & $\tid \mapsto \pc: \writeInst{\loc}{\exp}$ & $\assert{\vmax{\loc}{\tid} \land \sees{\tid}{[\loc=\exp]} \land\\ \curr{\loc}=e }$ & {\sc St} \\

 \hdashline

 \intab{$\tida \neq \tid$, $\loc \notin \fv(\inter_\tid)$} & $\assert{ \sees{\tid}{\inter_\tid}   \wedge  \sees{\tida}{\inter \chop \inter_\tid}}$ & $\tid \mapsto \pc: \writeInst{\loc}{\exp}$ & $\assert{\sees{\tida}{\inter \chop \inter_\tid}}$ & {\sc St-Other}\\

 \hdashline

 & $\assert{\cov{\loc} \land\\ \curr{x} = \exp_1}$ & $\tid \mapsto \pc: \faddInst{\reg}{\loc}{\exp_2}$ & $\assert{\cov{\loc}\land \curr{\loc}=\exp_1+\exp_2 \land \\\sees{\tid}{[\loc=\exp_1 + \exp_2]}  }$ & {\sc Rmw} \\

\hline

 & $\assert{\vmax{\loc}{\tid} \land \exp(\reg:=\curr{\loc})}$ & $\tid \mapsto \pc: \readInst{\reg}{\loc}$ & $\assert{\exp}$ & {\sc Ld} \\
 
 \hdashline

 P & $\assert{\sees{\tid}{[\exp(\reg:=\loc)];\inter}}$ & $\tid \mapsto \pc: \readInst{\reg}{\loc}$ & $\assert{\exp  \vee \psi}$ & {\sc Ld-Sh}\\

\hline
\end{tabular}}
\end{small}
\caption{\pico proof rules ($\varphi(\reg:=\exp)$ means replacement of $\reg$ by $\exp$ in $\varphi$ and $\tid$ is the executing thread and
$P \defeq \begin{array}[c]{@{}c@{}}
      \assert{\sees{\tid}{\inter}}\tid \mapsto 
      \readInst{\reg}{\loc}
      \assert{\psi}
    \end{array}$, $P$ is thus a Hoare triple)
}
\label{fig:rules}
\end{figure}

\begin{itemize} 
\item The rules for stability,  {\sc Stbl-St}, {\sc Stbl-Rmw}, and {\sc Stbl-Ld}, state that \pico formulae not referring to specific re\-gisters and locations, program counters, or negated view maximality, are not affected by load and store instructions.

\item Rule {\sc Sub} states the standard axiom of assignment of Hoare logic, which is here only defined with respect to registers and local expressions. 
 Rule {\sc Pc} is solely used to reason about program counter values.   

\item Rule {\sc St} describes the changes a store has on the potential of the writing thread, namely, if a thread is view-maximal, the only value it can see for location $\loc$ after the store is its own value. Furthermore, this value is the most recent value for $\loc$. 
Rule {\sc Rmw} describes a similar effect but requires the location to be covered, $\cov{\loc}$, so that the RMW cannot read from a stale value. This means in particular, that locations on which only RMWs are executed are always covered.

 \item Finally, the rules {\sc Ld} and {\sc Ld-Sh} describe the loading of values of shared locations into registers, where \textsc{Ld} covers the case in which the reading thread is view maximal. 

 \end{itemize}

\noindent Note that all of these proof rules are entirely syntactic: we have no semantics of store and load instructions on the level of potentials. Hence, such proof rules cannot be proven sound on this level. However, when deriving potentials from weak memory states (as we do in \cref{sec:lifting}), we can show soundness of rules within specific memory models.

\section{Liveness Proofs}
\label{sec:wait}

Next, we apply the proof rule WELL-J together with \pico and its proof rules to two example programs.  
We in particular illustrate how \pico allows us to write ranking functions over weak memory 
state and why we need memory internal steps as helpful transitions. 
\subsection{Program \texttt{Waiting}}

In Fig.~\ref{fig:proof}, we see an overview of the state assertions $\varphi_9,\ldots,\varphi_0$ and their helpful transitions $\mu_i$ necessary for applying the WELL-J rule to our \texttt{Waiting} program. 
\begin{figure}[t]
    \centering
    $\begin{array}{llll}
  \varphi_9: &\at{\pcm_0}\land \inv  & \tr_9: &\pcm_0 \\
  \varphi_8: &\at{\pcm_1}\land(\lsignal=0\lor\sees{\ctid{2}}{[\glock=1]})\land \inv  & \tr_8: &\pcm_1 \\
  \varphi_7: &\at{\pcm_2}\land(\lsignal=0\lor\llock=1)\land \inv  & \tr_7: &\pcm_2 \\
  \varphi_6: &\at{\pcm_3}\land\lsignal=0\land \inv  & \tr_6: &\pcm_3 \\
  \varphi_5: &\at{\pcm_{4,5}}\land\at{\pcl_0}\land\lsignal=0\land \inv  & \tr_5: &\pcl_0 \\
  \varphi_{4}: &\at{\pcm_{4,5}}\land\at{\pcl_1}\land\lsignal=0\land \inv  & \tr_4: &\pcl_1 \\
  \varphi_{3}: &\at{\pcm_{4,5}}\land\at{\pcl_2}\land\lsignal=0\land\neg\vmax{\gsignal}{\ctid{2}}\land \sees{\ctid{1}}{[\gsignal=1]}\land \inv \quad & \tr_3: &\mlab \\
  \varphi_{2}: &\at{\pcm_{4}}\land\at{\pcl_2}\land\lsignal=0\land\vmax{\gsignal}{\ctid{2}}\land \sees{\ctid{1}}{[\gsignal=1]}\land \inv  & \tr_2: &\pcm_4 \\
  \varphi_{1}: &\at{\pcm_{5}}\land\at{\pcl_2}\land\lsignal=0\land\vmax{\gsignal}{\ctid{2}}\land \sees{\ctid{1}}{[\gsignal=1]}\land \inv  & \tr_1: &\pcm_5 \\
  \varphi_0: & \at{\pcm_6}
\end{array}$
    \caption{State assertions and helpful transitions for program \texttt{Waiting}}
    \label{fig:proof}
\end{figure}
During all executions of \texttt{Waiting} the following invariant $\inv$ holds:
\[\inv: \sees{\ctid{2}}{[\gsignal=0];[\glock=1]}\land \sees{\ctid{2}}{[\gsignal=0];[\gsignal=1]}\land \vmax{\gsignal}{\ctid{1}}\]
We need this invariant (more specifically $\sees{\ctid{2}}{[\gsignal=0];[\glock=1]}$) for example to show that transition $\pcm_0$ brings us from $\varphi_9$ to $\varphi_8$. For this, we use the rules \textsc{Pc} and \textsc{Ld-Sh}. $\sees{\ctid{2}}{[\gsignal=0];[\glock=1]}$ holds throughout the entire program because of the \textsc{Stbl} rules and \textsc{St-Other}.

Note that this proof has an additional assertion ($\varphi_3$) compared to the \SC proof in \cref{sec:fair}, and assertions regarding locations are now written in \pico. After $\ctid{1}$ has finished, this is necessary to differentiate between the states where $\ctid{2}$ can only read the newest value 1 of $\gsignal$ and where it could still read 0.  
As helpful transitions of $\varphi_3$, we need memory internal transitions $\mlab$, since in this state, only view advancement will bring the program nearer to termination. 
This is possible in our setting since (memory) fairness allows us to include internal transitions $\mlab$ into the just set $\Just$.

 To apply rule WELL-J, we also need ranking functions for every state assertion. In this proof, they use the well-founded domain $\mathbb{N} \times \mathbb{N} \times\mathbb{N} \times \mathbb{N}$ and are defined as 
\[ \delta_i: (i, 1-\overrightarrow{\gsignal},\dist(\ctid{2},\gsignal),1-\lsignalloop)  \]
for $0\leq i\leq 9$.
In contrast to the \SC proof, our ranking functions here need to include $\dist(\ctid{2},\gsignal)$, measuring the distance of $\ctid{2}$ towards seeing the latest value of $\gsignal$. This helps us to ensure \textbf{JW3} holds for $\varphi_3$ (we still make progress with its helpful transition) even if $\ctid{2}$ is not directly view maximal after executing $\mlab$. 

Next, we will look at a further case study showing starvation freedom of the Ticket lock algorithm, using 
a variant of WELL-J for parameterized programs. 
\subsection{Ticket Lock} 

\begin{figure}[t]
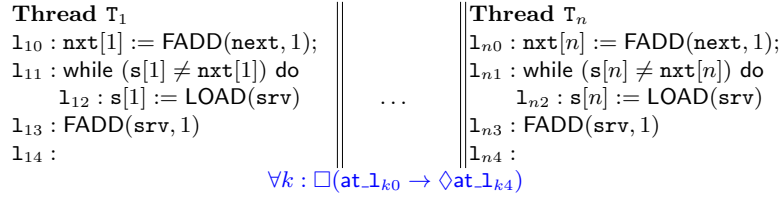

  \centering
  \scalebox{0.95}{
  \begin{tabular}[t]{@{}c@{}} 
  $\begin{array}{@{}l@{~}||@{~}l@{~}||@{~}l@{}}
    \begin{array}[t]{l}
     \textbf{Thread } \ctid{1}
     \\      
     \pcl_{10}: \faddInst{\lnext[1]}{\gnext}{1};  \\        
     \pcl_{11}: \kw{while}\ (\creg{s}[1]\neq\lnext[1])\ \kw{do} \\
     \qquad \pcl_{12}: \readInst{\creg{s}[1]}{\served} \\
     \pcl_{13}: \kw{FADD}(\served,1) \\
     \pcl_{14}: 
     \end{array}
    & \begin{array}[t]{l} \qquad \\ \\ \\ \phantom{bla} \ldots \phantom{bla} \\   \qquad \end{array}& 
    \begin{array}[t]{l}
     \textbf{Thread } \ctid{n}
     \\
       \pcl_{n0}: \faddInst{\lnext[n]}{\gnext}{1};  \\        
     \pcl_{n1}: \kw{while}\ (\creg{s}[n]\neq\lnext[n])\ \kw{do} \\
     \qquad \pcl_{n2}: \readInst{\creg{s}[n]}{\served} \\
     \pcl_{n3}: \kw{FADD}(\served,1) \\
     \pcl_{n4}: 
     \end{array}
     \end{array}$ \\
     ${\color{blue} \forall \tidc: \Box(\at{\pcl_{\tidc 0}} \rightarrow \Diamond \at{\pcl_{\tidc 4})}}$
  \end{tabular} }
    \caption{The Ticket Lock algorithm $\mathtt{Ticket}$ for $n$ threads; property stating starvation freedom for all threads $\ctid{\tidc}$, $\tidc =1, \ldots, n$}
    \label{fig:ticket}
\end{figure}

The Ticket lock algorithm~\cite{DBLP:journals/tocs/Mellor-CrummeyS91} given in Fig.~\ref{fig:ticket} is a synchronization mechanism employed to achieve mutual exclusion. In ticket locks, threads take tickets in the order of their arrival (stored in the register $\lnext[\tidc]$ for thread $\ctid{\tidc}$). 
The thread currently served is the one whose $\lnext$ is equal to the shared variable $\served$. The ``critical section" for which mutual exclusion should be achieved is the position $\at{\pcl_{\tidc 3}}$; after that $\served$ is incremented and the next thread is served. The property we are interested in here is {\em starvation freedom}: every thread wanting to be served is eventually served, i.e., reaches its (empty) critical section.

Here, we assume to have a finite number $n$ of threads running the ticket lock. We thus employ a version of WELL-J for parameterized programs (called WELL-JP), i.e., concurrent programs in which all threads execute the same code\footnote{In our example, the threads' code differs only in the (indices of) names of local registers. }. In this rule, intermediate assertions and helpful transitions can be parameterized (with parameters ranging over thread identifiers). 
\Cref{fig:well-jp} gives the rule WELL-JP. 

\begin{figure}[t]\scalebox{.9}{
    \begin{mathpar}
    \inference{\text{JW1}.\ \varphi \rightarrow  \bigvee_{j=0}^{m} \bigvee_{\tid \in \Tid} \varphi_j[\tid] \hfill \\  
    \text{JW2}.\ \assert{\varphi_i[\tid] \wedge    \delta_i[\tid] = z_i} \ \tr  \ \assert{\big(  \bigvee_{j=0}^{m} (\bigvee_{\tid'  \in \Tid}\varphi_j[\tid'] \wedge z_i \succ \delta_j[\tid'])\ \big) \vee (\varphi_i[\tid] \wedge z_i = \delta_i[\tid]) }  \\
    \text{JW3.}\ \assert{\varphi_i[\tid] \wedge   \delta_i[\tid] = z_i} \ \tr_i[\tid] \ \assert{ \bigvee_{j=0}^{m} (\bigvee_{\tid'  \in \Tid}\varphi_j[\tid'] \wedge z_i \succ \delta_j[\tid'])} \hfill \\
    \text{JW4.} \ \varphi_i[\tid] \rightarrow \En(\tr_i[\tid])\hfill }{\Box(\varphi \rightarrow \Diamond \psi)}
    \end{mathpar}}
    \caption{Rule {\textsc WELL-JP} with assertions $\varphi, \psi, \varphi_0, \varphi_1[\tid], \ldots, \varphi_m[\tid]$ ($\varphi_0=\psi$), 
    $\tr \in \T$ any transition, helpful transitions $\tr_1[\tid], \ldots, \tr_m[\tid] \in \Just$, ranking functions $ \delta_0, \ldots, \delta_m$, $z_i$ meta variables for elements in $A$ and $i=1, \ldots, m$}
    \label{fig:well-jp}
\end{figure}

Intermediate assertions and helping transitions for \texttt{Ticket} are as follows. We let $\mathit{hT}(\tidc) \equiv \at{\pcl_{\tidc 1},\pcl_{\tidc 2},\pcl_{\tidc 3}}$ 
to describe program positions in which $\ctid{\tidc}$ \underline{h}as taken a \underline{T}icket.
We prove the property of starvation freedom for all threads by showing it for some arbitrarily chosen thread called $\ctid{\ego}$. \Cref{fig:ticket-proof} gives intermediate assertions and helpful transitions. 

\begin{figure}[t]
$\begin{array}{llll}
  \varphi_5: &\at{\pcl_{\ego 0}}  & \tr_5: &\pcl_{\ego 0} \\
  \varphi_{4}[\ctid{\tidc}]: &\at{\pcl_{\tidc 1},\pcl_{\tidc 2}} \wedge \mathit{hT}(\ego) \wedge \lnext[\tidc]=\overrightarrow{\served} \wedge \neg (\vmax{\served}{\ctid{\tidc}}) \land  & \tr_{4}[\ctid{\tidc}]: & \mlab\\
  &\sees{\ctid{\tidc}}{[\served = 0]\chop \ldots\chop[\served=\overrightarrow{\served}]} \land \mathtt{s}[\tidc]\leq \overrightarrow{\served}&&\\
  \varphi_{3}[\ctid{\tidc}]: &\at{\pcl_{\tidc 1}} \wedge \mathit{hT}(\ego) \wedge \lnext[\tidc]=\overrightarrow{\served} \wedge \vmax{\served}{\ctid{\tidc}} \land \mathtt{s}[\tidc]\leq \overrightarrow{\served} & \tr_{3}[\ctid{\tidc}]: &\pcl_{\tidc 1} \\
  \varphi_{2}[\ctid{\tidc}]: &\at{\pcl_{\tidc 2}} \wedge \mathit{hT}(\ego) \wedge \lnext[\tidc]=\overrightarrow{\served} \wedge \vmax{\served}{\ctid{\tidc}} \land \mathtt{s}[\tidc]< \overrightarrow{\served} & \tr_{2}[\ctid{\tidc}]: &\pcl_{\tidc 2} \\
  \varphi_{1}[\ctid{\tidc}]: &\at{\pcl_{\tidc 3}} \wedge \mathit{hT}(\ego) \wedge \lnext[\tidc]=\overrightarrow{\served} \wedge \vmax{\served}{\ctid{\tidc}} \land \mathtt{s}[\tidc] = \lnext[\tidc] \quad \ & \tr_{1}[\ctid{\tidc}]: &\pcl_{\tidc 3} \\
  \varphi_0: & \at{\pcl_{\ego 4}}
\end{array}$
\caption{State assertions and helpful transitions for program \texttt{Ticket}}
\label{fig:ticket-proof}
\end{figure}

The ranking functions use the well-founded domain $(\mathbb{N} \cup \{\infty\}) \times (\mathbb{N} \cup \{\infty\}) \times \{0,1,2,3,4,5\} \times (\mathbb{N}\cup \{\infty\})$ and are defined as 
$\delta_5: (\infty,\infty,5,\infty)$, $\delta_0: (0,0,0,0) $
and 
$\delta_i[\ctid{\tidc}]: (\lnext[\ego] - \overrightarrow{\served},\lnext[\tidc]-\mathtt{s}[\tidc],i,\dist(\ctid{\tidc},\served))$,
for $i\in\{1,2,3,4\}$.

The assertions and ranking functions basically state that some thread has a ticket in $\lnext$ which is equal to the most recent value of $\served$, and this thread makes progress towards observing $\overrightarrow{\served}$ (via some helpful transition $\mlab$) as well as towards reading this value and checking it to be the same as its $\lnext$.
By executing $\pcl_{\tidc 3}$,  this thread will then increment $\served$, thus decreasing the distance for $\ego$ to have its own ticket in $\lnext[\ego]$ become the same as $\overrightarrow{\served}$. To be able to show all the premises, we need some further invariants about program $\mathtt{Ticket}$, e.g., $0 \leq \overrightarrow{\served} \leq \overrightarrow{\gnext}$ 
and in particular $\cov{\gnext}$. 

Note that we need the read-modify-write instruction \kw{FADD} in both positions $\pcl_{\tidc 0}$ and $\pcl_{\tidc 3}$. Without them (i.e., with a plain \kw{LOAD} followed by a \kw{STORE}), the executing thread might read a stale value for $\gnext$ (so that multiple threads have the same $\lnext$ then) or for $\served$ (so that the most recent value for $\served$ is not incremented). Fairness wrt.~memory model internal steps does not help in that case as the instructions are not repeatedly executed.

\section{Lifting Memory Models to \pico Proofs } \label{sec:lifting}

Our liveness proof of \texttt{Waiting} is so far completely independent of 
a concrete memory model. 
We, however, ultimately want to be able to say whether this proof is valid for a specific memory model. 
To this end, we need to fix how we interpret \pico on a memory model $\M$. 
\begin{definition}\label{def:picomap}

A function $\map_\M:\M.\lQ\to(\Tid\to\LL)$ is a {\em  lifting function} of \M\ if $\map_\M$ fulfills the following property for all initial states $q\in\M.\linit$:
\begin{align*}
    \tag{\textsc{Init}}
    \tup{(\regstore,\pcmap),\map_\M(q)}\models\forall\tid\in\Tid\ \forall\loc\in\Loc:\vmax{\loc}{\tid}\land\cov{\loc}\land\sees{\tid}{[\loc=0]}
\end{align*}
\end{definition}

A lifting function thus maps memory states to potential mappings, and in particular maps initial states of a memory model to ``initial" states in the potential domain.  Such lifting functions can be employed for interpreting \pico formulae on states $\state=\tup{\mathcal{C},(\regstore,\pcmap),q}$ of the combined semantics 
by defining 
$\state\models\varphi$ to be $\tup{(\regstore,\pcmap),\map_\M(q)}\models\varphi$.

 \begin{figure}[t]
\centering
\begin{small}
\scalebox{0.87}{
\begin{tabular}[t]{|l|r@{}c@{}l|l@{}|}
\hline
\scalebox{0.87}{Assumption} & Pre &\ Command&\  Post & Reference\,  \\
 \hline

\parbox{2.5cm}{
$\forall\tid\forall\loc\forall n>0:$ \\$\dist(\tid,\loc)=n \notin \varphi$
}\, 
& $\assert{\varphi}$ & $\mlab$ 
& $\assert{\varphi}$ & {\sc Stbl-Int} \\

\hdashline 

 & $\assert{\dist(\tid,\loc)=n\land n>0}$ & $\mlab$
& $\assert{\dist(\tid,\loc)= m \land m<n}$ & {\sc Dist} \\

\hline
\end{tabular}}
\end{small}
\caption{\pico proof rules for memory internal transitions labelled $\mlab\in\M.\lTheta$ 
}
\label{fig:intrules}
\end{figure}

With this in mind,  we define what it means for a memory model to fulfill a \pico proof rule. 
\begin{definition}\label{def:soundness}
    The \pico proof rule $\{\varphi\}\ \tid \mapsto \pc: \pcmd\ \{\psi\}$  is {\em sound} in a memory model $\M$ if for every LTS $\cs{\M}=(\lQ,\OLab,\lTheta,\linit,\T)$ we have 
    $$\forall\state,\state'\in\lQ: \state\models\varphi\land \state\astep{\tid, lab(\pcmd), \pc}\state' \Rightarrow \state'\models\psi$$
     where $lab(\pcmd)$ maps each command $\pcmd$ to its respective labels in $\Lab_\varepsilon$.  Similarly, we define $\{\varphi\}\ \mlab \ \{\psi\}$ for $\mlab \in \lTheta$. 
\end{definition}

For achieving validity of liveness proofs for memory models, the following basic healthiness conditions furthermore need to be satisfied by a memory model. 

 \begin{definition}\label{def:picointernal} 
     We say that a lifting function $\map_\M$ of a memory model $\M$ guarantees {\em view advancement of internal transitions}, if the following conditions hold for all internal labels $\mlab\in\M.\lTheta$ and states $\memstate\in\M.\lQ$: 
     \begin{enumerate}
         \item $ 
      \tup{(\regstore,\pcmap),\map_\M(\memstate)}\models(\exists \tid\exists\loc: \dist(\tid,\loc)\neq 0)\implies En(\mlab)$, and 
        \item the rules in \cref{fig:intrules} are sound for $\mlab$-labelled transitions. 
     \end{enumerate}
 \end{definition}

 Recall $\dist(\tid,\loc)\neq 0$ implies that thread $\tid$ cannot see the newest value for $\loc$ at the moment (its distance to this value is not zero).  
 For liveness proofs via rule WELL-J, we want a helpful transition to decrease this distance such that we eventually reach a state where $\dist(\tid,\loc)= 0$ and $\tid$ therefore sees the newest value of $\loc$. \Cref{def:picointernal} guarantees that we can use memory internal transitions as helpful transitions, since a $\mlab$-labelled transition is enabled when the distance is not zero (condition 1) and it decreases $\dist$ (condition 2).

\begin{figure}[t]

\begin{mathpar}
 \inferrule*{ \lab=\rlab{}{\loc}{\valr} \\ \tup{\loc : \valr @ \tra, \tid, \vra}\in\statera.\memra\\ 
 \statera.\tsra(\tid)(\loc)= \tra
   }
   {\statera\asteptidlab{\tid}{\lab}{\RA} \statera } 
   \and 
   \inferrule*{\lab=\wlab{}{\loc}{\valw} \\   \neg \exists \mra\in\statera.\memra_{|\loc}: \mra.\lTS =\tra \\
   \statera.\tsra(\tid)(\loc) <\tra
   }
   {\statera\asteptidlab{\tid}{\lab}{\RA} \tup{\lz{\statera.\memra\cup \{\tup{\loc :\valw @\tra, \tid,\statera.\tsra(\tid) [\loc\mapsto\tra]]}}\} ,\lo{\statera.\tsra [\tid \mapsto \statera.\tsra(\tid) [\loc\mapsto\tra]]}} } 
  \and 
   \inferrule*{ \lab = \rmwlab{}{\loc}{\valr}{\valw} \\ \lab_1=\rlab{}{\loc}{\valr} \\ \lab_2=\wlab{}{\loc}{\valw}\\\\
   \statera'.\tsra(\tid)(\loc) =\statera.\tsra(\tid)(\loc)+1\\
   \statera\asteptidlab{\tid}{\lab_1}{\RA} \statera\asteptidlab{\tid}{\lab_2}{\RA} \statera'   }
 {\statera\asteptidlab{\tid}{\lab}{\RA} \statera' } 
 \and 
   \inferrule*{\lab=\prop \\   \mra\in\statera.\memra \\  \exists\tid\in\Tid:\tsra(\tid)(\mra.\lLOC)<\mra.\lTS 
   }
   {\statera\asteplab{\lab}{\RA} \tup{\lz{\statera.\memra},\lo{\statera.\tsra [\tid \mapsto \statera.\tsra(\tid) \sqcup \mra.\lVIEW}]  } } 
\end{mathpar}
\vspace{-10pt}\caption{Operational semantics of \RA using colors to highlight the updated \lz{memory} and \lo{view} components; 
$\memra_{|\loc}\subseteq\memra$ is the set of messages $\mra$ with $\mra.\lLOC=\loc$. 
}
\label{fig:RA}
\end{figure}
\textbf{Memory Model Release-Acquire.} As an example of a concrete memory model and its lifting to potentials, we consider the Release-Acquire memory model \RA, following the semantics of~\cite{DBLP:journals/pacmpl/LahavNOPV21}\footnote{We slightly deviate from~\cite{DBLP:journals/pacmpl/LahavNOPV21} as to separate reads and view advancements.}.  
For \RA, we get the following components of the LTS:
\begin{itemize}
    \item The set of memory model internal labels is $\RA.\lTheta\defeq\{\prop\}$. 
    \item Each state $\statera=\tup{\memra,\tsra} \in \RA.\lQ$ consists of a set $\memra$ of {\em messages} and a {\em view} $\tsra$ per thread.  
    A view is a mapping from locations to timestamps (here, $\mathbb{N}$), i.e., $\View \defeq \Loc \rightarrow \mathbb{N}$. We use $\sqcup$ on views for the point-wise maximum on timestamps. Messages $m \in \memra$ 
    record the previously executed writes and store location $\loc$, written value $\val$, and timestamp $\tra$ (denoted $\loc:\val@\tra$) plus the writing thread and the view of the thread when writing, i.e., messages have the type $(\Loc\times\Val\times\N)\times\Tid\times \View$.  The view  $\tsra:\Tid\to\View$ maps each thread to a timestamp for each location.
Thus,   $\RA.\lQ\defeq\mathcal{P}(\Loc\times\Val\times\N\times\Tid\times\View)\times(\Tid\to\View)$. 
 
\item The set of initial states is $\RA.\lQ_0\defeq\set{\tup{\memra_0,\tsra_0}}$. In this, the memory $\memra_0$ contains one message $\tup{\loc:0@0, \ctid{0},V_0}$ where $V_0(x)=0$ for each location $\loc$ and $\tsra_0(\tid)=V_0$ for each thread $\tid$.  
\item The transition relation $\asteplab{}{\RA}$ is given in Fig.~\ref{fig:RA}.
For a given location $\loc$, the read transition chooses the message $\mra\in\memra_{|\loc}$ to read from with a timestamp equal to the thread's view on $\loc$. 
The write transition adds a new message to memory. Its timestamp $\tra$ has to be greater than the current view of the thread on $\loc$, which then gets updated to $\tra$.
The RMW transition simply first executes a read and after that a write transition. To ensure that no thread can write between the written message $\mra_w$ and the one read from, $\mra_r$, the timestamp is set to $\mra_w.\lTS=\mra_r.\lTS +1$.
Finally, the propagate transition updates the view of a thread $\tid$ to the maximum of its current view and the view of a message $\mra$. For this, the timestamp of $\mra$ needs to be greater than the view of $\tid$ on $\mra$'s location. 
\end{itemize}

 \textbf{Lifting of \RA States.}
When lifting \RA states to the potential domain, the auxiliary information \Aux{} we need are  {\em timestamps}, i.e.,  we have $\Aux=\N$.
An \RA state determines multiple potential store lists per thread. 
Each such list $\L\in\DD(\tid)$ starts with the same potential store
\begin{eqnarray*}
   \Delta_\tid(\statera): & \Loc  & \to (\Val \times \Tid \times \N) \times \{\tcovered,\fcovered\} \\
                       & \loc & \mapsto \tup{\mra_\tid.\lVAL,  \mra_\tid.\lTID, \mra_\tid.\lTS ,cov_\statera(\mra_\tid)} 
\end{eqnarray*}
where $\mra_\tid$ is the message in $\statera.\memra_{|\loc}$ with $\mra_\tid.\lTS=\statera.\tsra(\tid)(\loc)$ and 
\begin{align*}
cov_\statera(\mra_\tid)\defeq & \ {\bf if}\ \mra_\tid.\lTS=\max\{\tra\mid \tra=\mra'.\lTS \land \mra'\in\statera.\memra_{|\loc}\}\\
& \lor\exists\mra'\in\statera.\memra_{|\loc}:\mra'.\lTS=\mra_\tid.\lTS+1\ {\bf then}\ \tcovered \ {\bf else}\  \fcovered\ . 
\end{align*} 
Note that this definition is well-defined since for every \RA state there exists exactly one such $\mra_\tid\in\memra$ for each thread $\tid$. 
If $cov_\statera(\mra_\tid)=\tcovered$ for all threads $\tid$, then the message of a new write to $\loc$ automatically receives the highest timestamp with respect to $\loc$ and is therefore the latest one.
This leads us to the lifting map $\map_\RA:\RA.\lQ\to\DD$, where

\begin{align*}
    \map_\RA(\statera)(\tid)\defeq \Delta_\tid(\statera)\cdot \bigcup_{ \statera' \text{ s.t. } \statera\asteplab{\prop}{\RA} \statera' } map_\RA(\statera')(\tid)
\end{align*}

Here, we let $e \cdot \mathsf{L} \defeq \{ \tup{e} \cdot \mathit{L} \mid \mathit{L} \in \mathsf{L}\}$.  Since timestamps are unique for every write to a given location $\loc$ in an \RA-state $\statera$, and propagates have to increase a thread's view, every list in each $\map_\RA(\statera)(\tid)$ ends with the same potential store. 
Hence, $\map_\RA$ is a potential 
according to \cref{def:sra-state}.

\smallskip 
\textbf{Soundness of Rules in \RA.} With the lifting for \RA at hand, we can formally study the soundness of \pico proof rules in \RA. 

\begin{theorem} \label{th:sound-ra}
    All \pico proof rules in Fig.~\ref{fig:rules} are sound for \RA. 
\end{theorem}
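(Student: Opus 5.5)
The proof goes rule by rule. First I establish a concrete characterisation of $\map_\RA$ that turns each \pico atom into a statement about an \RA state $\statera=\tup{\memra,\tsra}$; each Hoare triple is then checked against the transitions of \cref{fig:RA} via \cref{def:soundness}.

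\textbf{Characterisation lemma.} Unfolding the recursive definition, each list $\L\in\map_\RA(\statera)(\tid)$ is the sequence of first stores $\Delta_\tid$ along a maximal chain of $\prop$ steps. Since $\prop$ only changes views and only increases them, each list corresponds to a monotone sequence of views of $\tid$. On every location $\loc$ this sequence runs from $\tsra(\tid)(\loc)$ up to the maximal timestamp in $\memra_{|\loc}$. The lemma I would prove by induction on the (finite) number of messages above $\tid$'s view is:
\begin{itemize}
\item $\sees{\tid}{[\eexp]}$ holds iff $\eexp$ holds for every message reachable by such a view sequence;
\item $\curr{\loc}$ is the value of the maximal-timestamp message in $\memra_{|\loc}$;
\item $\vmax{\loc}{\tid}$ holds iff $\tsra(\tid)(\loc)$ equals that maximal timestamp;
\item $\cov{\loc}$ holds iff every message of $\loc$ at a reachable view is either maximal or has its timestamp successor present. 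Since views range over all message timestamps, this amounts to ``no gaps'' in the timestamps of $\memra_{|\loc}$ above the views.
\end{itemize}
I also need that the set of reachable lists only depends on $\memra$ and $\tsra(\tid)$. I will not rely on the other threads' views, because propagation draws on message views and not on thread views.

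\textbf{Rule-by-rule checks.}
\begin{itemize}
\item \textsc{Pc} and \textsc{Sub} are immediate, since they concern only the local state $(\regstore,\pcmap)$ and the memory is untouched or unconstrained.
\item For the stability rules, a load or an $\epsl$ step leaves $\statera$ unchanged, so every formula not mentioning $\reg$ or $\at\pc$ is preserved. This leaves the stores and RMWs.
  \begin{itemize}
  \item \textsc{Stbl-St} and \textsc{Stbl-Rmw}: a write to $\loc$ adds a message for $\loc$ and updates $\tsra(\tid)(\loc)$. Other threads' reachable lists may gain new stores, but those differ from existing ones only in $\loc$-components and timestamps. So interval formulae, $\curr{\loca}$ and $\vmax{\loca}{\cdot}$ for $\loca\neq\loc$, and $\cov{\loca}$ are unaffected.
  \item In \textsc{Stbl-St}, $\varphi$ may contain $\vmax{\loca}{\tid}$ for $\loca\neq\loc$. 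This is preserved because the writer's view on $\loca$ is unchanged.
  \item The exclusion of negated view maximality in \textsc{Stbl-Rmw} and \textsc{Stbl-Ld} is needed because a read in \RA (in our variant) does not change views. I will check that for RMW the view on $\loc$ changes, but $\loc\notin\fv(\varphi)$.
  \end{itemize}
\item \textsc{St}: view maximality means the new timestamp exceeds all existing ones, so the message is maximal. $\tid$'s view equals it, hence $\tid$ sees only $\exp$, and $\curr{\loc}=\exp$.
\item \textsc{Rmw}: I will show that coveredness forces the read to be from the maximal message. If it were not, a message with timestamp $+1$ already exists, and the new write's timestamp would clash with it, contradicting the side condition of the write rule. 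Hence the new message is maximal, $\curr{\loc}$ increases by $\exp_2$, and coverage is preserved since the new timestamp is exactly the successor.
\item \textsc{Ld} uses that the read message is the maximal one.
\item \textsc{Ld-Sh}: the read message is $\tid$'s first store. Either the list's first segment satisfying $\exp(\reg:=\loc)$ contains it, giving $\exp$, or the first segment is empty, and then the premise $P$ applied to the same transition yields $\psi$.
\end{itemize}

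\textbf{Main obstacle.} The hardest part is \textsc{St-Other}: after $\tid$ writes, I must show that every reachable list of $\tida$ still splits as $\inter\chop\inter_\tid$. My plan is to take any new $\prop$ chain for $\tida$. Its prefix that avoids absorbing the new message (or views containing it) corresponds to an old list, so it satisfies $\inter$ followed by a prefix of the $\inter_\tid$ part. Once the new message's view $\tsra(\tid)[\loc\mapsto\tra]$ is joined, $\tida$'s view dominates $\tid$'s old view on all locations other than $\loc$. The remaining stores are therefore, except on $\loc$, among stores seen by $\tid$, and they satisfy $\inter_\tid$ because $\loc\notin\fv(\inter_\tid)$. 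Making this precise requires a careful correspondence between view sequences and interval splits, and in particular handling the $\loc$-component via the free-variable side condition.
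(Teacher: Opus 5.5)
Your decomposition matches the paper's: rule by rule, with \textsc{St-Other} as the only delicate case. Some of your checks are sharper than the paper's. You correctly use that in this \RA variant a load leaves the memory state unchanged, whereas the paper's \textsc{Stbl-Ld} argument speaks of reads advancing views, which contradicts Fig.~\ref{fig:RA}. Your \textsc{Rmw} argument via the timestamp clash at $\tsra(\tid)(\loc)+1$ also makes explicit why coveredness forces reading the maximal message.

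The gap is in \textsc{St-Other}, where the paper itself offers only the intuition that $\tida$ cannot see the new message before it sees $\inter_\tid$. Write a new list of $\tida$ as $P\cdot S$, with $P$ the stores before $\tida$ propagates the new message and $S$ the stores from then on. Your plan gives two facts: $P$ satisfies $\inter$ followed by an initial piece $J$ of the $\inter_\tid$-part of some old list, and $S$ satisfies $\inter_\tid$. But $J\models\inter_\tid$ and $S\models\inter_\tid$ do not yield $J\cdot S\models\inter_\tid$ once $\inter_\tid$ contains a chop. For instance, take $\inter_\tid=[\loca=1]\chop[z=1]$, $J$ a single store with $\loca=0, z=1$, and $S$ starting with a store with $\loca=1, z=0$. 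Moreover, ``$\tida$'s view dominates $\tsra(\tid)$ off $\loc$, hence these stores are among those seen by $\tid$'' does not follow. Domination does not make a view reachable for $\tid$, because propagating a message may overshoot on other locations. And for chop formulas you need $S$, restricted to $\Loc\setminus\{\loc\}$, to be a sublist of a \emph{single} list of $\tid$; it is not enough that each store occurs in some list.

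The missing ingredient is a closure invariant of \RA. For every thread view or message view $V$ and every location $\loca$, the $\loca$-message with timestamp $V(\loca)$ exists and has view $\sqsubseteq V$; this is preserved by writes, RMWs and propagation. Hence from such a view $w$ a thread can reach exactly $w\sqcup V$ without overshooting, by propagating the $\loca$-message at $V(\loca)$ for each $\loca$ with $V(\loca)>w(\loca)$.

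Let $w$ be $\tida$'s view just before it propagates the new message. In the old state, $\tida$ can perform $P$, climb from $w$ to $w\sqcup\tsra(\tid)$, and then replay its remaining propagations. Enabledness is identical for non-$\loc$ messages and implied for $\loc$-messages, since the replayed $\loc$-view never exceeds the new one. The resulting old list $L$ agrees with $P$ and contains $S$ off $\loc$ as a sublist, up to stuttering. Now split $L$ according to the precondition:
\begin{itemize}
\item If the chop point lies inside $P$, i.e.\ $P=A\cdot J$, then $J\cdot S$ is off $\loc$ a sublist of the $\inter_\tid$-part. So $J\cdot S\models\inter_\tid$, as $\loc\notin\fv(\inter_\tid)$.
\item Otherwise $P\models\inter$. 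Also $S\models\inter_\tid$, because the same climbing argument makes $S$ off $\loc$ a sublist of a list of $\tid$; this is where $\sees{\tid}{\inter_\tid}$ enters.
\end{itemize}
You also need to state that interval assertions are closed under sublists and stuttering.

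The same sublist argument is missing in \textsc{Stbl-St}. A new list of $\tida$ can skip stores when it absorbs the new message, so it does not differ from an old list only in $\loc$-components.

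Separately, under the intended global reading of $\curr{\loc}$, \textsc{St-Other} also needs $\curr{\loc}\notin\inter$, since a write at the maximal timestamp changes $\curr{\loc}$; neither you nor the paper address this.

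Finally, your claim that excluding negated view maximality in \textsc{Stbl-Ld} and \textsc{Stbl-Rmw} ``is needed because a read does not change views'' is backwards. For \RA that side condition is simply unnecessary.
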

All soundness proofs can be found in the appendix. Finally, we have to look at the memory model's internal steps. 

\begin{lemma} \label{th:sound-dist}
    The lifting function $\map_\RA$ guarantees view advancement of $\RA$s internal transitions.
\end{lemma}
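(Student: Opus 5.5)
The plan is to first prove a structural fact about $\map_\RA$ and then derive both conditions of \cref{def:picointernal} from it.

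\textbf{Structural fact.} Every list $\L\in\map_\RA(\statera)(\tid)$ is obtained by unfolding the recursive definition along a sequence of $\prop$ steps. Each such step that changes $\tid$'s view strictly increases $\statera.\tsra(\tid)$ on some location. These views are bounded by the finitely many timestamps in $\statera.\memra$, so the recursion is well-founded and all lists are finite.

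Along a list, the entry for $\loc$ is always $\Delta$ of the message at $\tid$'s current view on $\loc$. Its timestamp is non-decreasing, and the last store holds the message with maximal timestamp in $\memra_{|\loc}$. Consequently, if any store in $\L$ differs from $\L[\#\L]$ at $\loc$, then the first store $\Delta_\tid(\statera)$ already differs. Hence $\dist(\tid,\loc)\neq 0$ iff $\statera.\tsra(\tid)(\loc)<\max\{\mra.\lTS\mid \mra\in\memra_{|\loc}\}$.

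Moreover, if $\statera\asteplab{\prop}{\RA}\statera'$, I will show by unfolding the definition that every list in $\map_\RA(\statera')(\tida)$ is a suffix of some list in $\map_\RA(\statera)(\tida)$, for every thread $\tida$. The memory $\memra$ is unchanged, and propagation composes along views.

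\textbf{Condition 1 (enabledness).} Assume $\dist(\tid,\loc)\neq 0$. By the characterisation above, the message $\mra$ with maximal timestamp in $\memra_{|\loc}$ satisfies $\tsra(\tid)(\loc)<\mra.\lTS$. This is exactly the premise of the $\prop$ rule in \cref{fig:RA}, so $\En(\mlab)$ holds.

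\textbf{Condition 2, rule \textsc{Stbl-Int}.} I will argue by structural induction on $\varphi$, which by assumption contains no positive $\dist$ atoms, using the suffix property.
\begin{itemize}
\item Registers, $\at{\pc}$ and $\curr{\loc}$ are untouched, since $\memra$ is unchanged and the last store is fixed.
\item Interval assertions $\sees{\tida}{\inter}$ for $\inter=[\eexp_1]\chop\ldots\chop[\eexp_k]$ are closed under taking suffixes of lists, because we may drop a prefix of the chop decomposition.
\item $\cov{\loc}$ only quantifies over stores of lists, and a suffix's stores are a subset of the original stores, with $cov$ computed from the unchanged memory.
\item $\vmax{\loc}{\tida}$, i.e.\ $\dist=0$, is preserved because suffixes of lists with constant $\loc$-entries stay constant.
\end{itemize}

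\textbf{Condition 2, rule \textsc{Dist} (main obstacle).} Monotonicity from the suffix property gives only $m\le n$. A $\prop$ step for another thread, or for a message not lying beyond $\tid$'s view on $\loc$, leaves $\dist(\tid,\loc)$ unchanged. To obtain strict decrease, I plan to read $\mlab$ in this rule as the specific propagate instance that the helpful transition denotes: propagation to $\tid$ of the maximal message $\mra$ for $\loc$. This is the instance shown enabled in Condition 1, and it is the one fairness forces to be taken. After this step $\tsra(\tid)(\loc)=\mra.\lTS$, so the first store already equals the last at $\loc$, giving $m=0<n$.

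If the paper instead intends $\mlab$ to range over all $\prop$ instances, I would refine $\prop$ labels by the pair (thread, message) so that this instance is a distinct just transition. I expect justifying this identification to be the delicate point of the proof.
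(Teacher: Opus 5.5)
Your arguments for Condition~1 and for \textsc{Stbl-Int} are fine, and more careful than the paper's. The \textsc{Dist} step, however, does not prove the lemma as stated. By Def.~\ref{def:soundness}, \textsc{Dist} must hold for \emph{every} $\mlab$-labelled transition, and $\RA.\lTheta=\{\prop\}$. Reading $\mlab$ as the single instance ``propagate the maximal $\loc$-message to $\tid$'', or refining labels to (thread, message) pairs, changes which transitions the rule is about; the second option even changes the memory model. Either way you would be proving a different statement.

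The observation you are missing is that your suffix property is sharper than you use it. By the recursive definition of $\map_{\RA}$, take \emph{any} step $\statera\asteplab{\prop}{\RA}\statera'$, whatever thread and message it involves. Then each $\L'\in\map_{\RA}(\statera')(\tid)$ satisfies $\Delta_\tid(\statera)\cdot\L'\in\map_{\RA}(\statera)(\tid)$, i.e.\ it is an old list with exactly its first store removed. By your own structural fact, $\dist(\tid,\loc)=n>0$ forces this common first store $\Delta_\tid(\statera)$ to differ at $\loc$ from the common last store. So each new list has one non-latest $\loc$-entry fewer than the old list it comes from, hence at most $n-1$, and $m<n$ for every $\prop$ step. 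This is the paper's argument (``each list in $\tid$'s potential loses at least one entry''). Propagation to another thread is covered because the dropped first store is then a stuttering copy of $\Delta_\tid(\statera)$, which still counts.

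This argument counts non-latest entries with multiplicity, i.e.\ as list positions $i$, which is how the paper's proof reads $\dist$. Your claim that a propagation to another thread leaves $\dist(\tid,\loc)$ unchanged is true only if $\#$ counts \emph{distinct} entries, as the set-builder notation in the definition of $\dist$ literally suggests. Under that reading your worry is justified. Suppose $\tid$ and some $\tida\neq\tid$ both have view $0$ on $\loc$, and $\memra_{|\loc}$ has exactly one message newer than timestamp $0$. Propagating it to $\tida$ leaves the distinct-entry distance of $\tid$ at $1$. Then \textsc{Dist} fails for $\prop$, the paper's one-line inference breaks too, and the lemma as stated is false.

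Either way, your last two paragraphs do not establish the stated lemma. Under the positional reading the restriction on $\mlab$ is unnecessary, since strict decrease follows from the first-store observation above. Under the distinct-entry reading the statement itself has to be changed, and your label refinement is such a change rather than a proof.
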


As a corollary, we get the following. 

\begin{corollary}
    The liveness proofs of the programs \texttt{\tt{Waiting}} in Fig.~\ref{fig:proof} and \texttt{Ticket} in Fig.~\ref{fig:ticket-proof} are valid under the memory model \RA. 
\end{corollary}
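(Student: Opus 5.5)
The corollary follows from the soundness of rule WELL-J (resp.\ WELL-JP), together with \cref{th:sound-ra} and \cref{th:sound-dist}. Rule WELL-J is sound for every LTS, so it applies to $\cs{\RA}$ with the just set $\Just$ containing all program transitions and the internal $\prop$ transitions. The \pico assertions are interpreted on $\cs{\RA}$ states via $\map_\RA$, i.e.\ $\state\models\varphi$ iff $\tup{(\regstore,\pcmap),\map_\RA(q)}\models\varphi$. It therefore suffices to show that each premise JW1--JW4, as derived in \cref{sec:wait} using only \pico rules, holds semantically in $\cs{\RA}$.

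First, the invariants $\inv$ (for \texttt{Waiting}) and $0\le\overrightarrow{\served}\le\overrightarrow{\gnext}$ together with $\cov{\gnext}$ (for \texttt{Ticket}) must hold in all reachable states. The base case is the initial state. By property \textsc{Init} of the lifting function (which I expect to check directly for $\map_\RA$: in $\tup{\memra_0,\tsra_0}$ there is one message per location with timestamp $0$, so every list is a single store with value $0$, flag $\tcovered$ and distance $0$), the invariants hold initially. Preservation under program steps follows from the Hoare-triple derivations, which use only rules from \cref{fig:rules}; these are sound for \RA\ by \cref{th:sound-ra}. Preservation under $\prop$ steps follows from rule \textsc{Stbl-Int}, whose soundness is part of \cref{th:sound-dist}, because the invariants contain no positive-distance literal. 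Premise JW1 is a pure implication between assertions and carries over unchanged.

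For JW2 and JW3 under program transitions, every triple in the proof outline was obtained by composing rules of \cref{fig:rules} with consequence and conjunction. Soundness of each rule (\cref{th:sound-ra}) gives semantic validity of each triple. For $\prop$ transitions, the assertions mentioning $\neg(\vmax{\cdot}{\cdot})$ or $\dist$ are handled by rule \textsc{Dist}. This gives JW3 for $\varphi_3$ (Waiting) and $\varphi_4[\ctid{\tidc}]$ (Ticket): after one step, either the distance component of $\delta$ strictly decreases while the assertion is kept, or it reaches $0$ and we move to $\varphi_2$ (resp.\ $\varphi_3[\ctid{\tidc}]$) with a smaller index. All other assertions are stable under $\prop$ by \textsc{Stbl-Int}, with $\delta$ unchanged, which gives JW2. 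JW4 for program helpful transitions is immediate from $\at{\pc}$ conjuncts, since $\En(\pc)$ means $\pcmap(\tid)=\pc$. JW4 for $\mlab$ follows from condition~1 of \cref{def:picointernal}, because $\varphi_3$ and $\varphi_4[\ctid{\tidc}]$ contain a nonzero distance.

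The main obstacle I anticipate is the ranking functions. They mention $\overrightarrow{\served}$, $\overrightarrow{\gsignal}$ and $\dist$, which are interpreted via $\map_\RA$, so the ``$z_i=\delta_i$'' parts of the triples need the values of these terms to be preserved or decreased. For current values I would argue directly from the \RA\ semantics that $\prop$ does not add messages, so $\curr{\loc}$ is unchanged, and that writes by other threads only change $\curr{\loc}$ for the written location. The one spot needing care is the RMW case in Ticket: there $\cov{\served}$ and rule \textsc{Rmw} ensure that $\overrightarrow{\served}$ increases by exactly one, so the first component $\lnext[\ego]-\overrightarrow{\served}$ decreases.
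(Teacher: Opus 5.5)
Your proposal is correct and follows the paper's route. The paper obtains the corollary directly from the LTS-generic soundness of WELL-J/WELL-JP together with \cref{th:sound-ra} and \cref{th:sound-dist}; your treatment of the invariants via \textsc{Init}, of JW4 for $\mlab$ via condition~1 of \cref{def:picointernal}, and of JW3 via \textsc{Dist} just makes explicit what the paper leaves implicit. One small slip: for $\varphi_9,\ldots,\varphi_6$ of \texttt{Waiting} the ranking function need not stay unchanged under $\prop$. Its $\dist(\ctid{2},\gsignal)$ component may already be positive there (if $\ctid{1}$ has run ahead), and a $\prop$ step then decreases it; JW2 still holds via the disjunct $\varphi_i \wedge z_i \succ \delta_i$.
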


\textbf{Other Memory Models.} 
Besides \RA, all rules of \cref{fig:rules,fig:intrules}  are also sound in memory models $\SC$ and $\TSO$~\cite{DBLP:journals/cacm/SewellSONM10,DBLP:conf/tphol/OwensSS09}. 
The appendix furthermore contains a definition of $\strongC$ and soundness proofs. 
Memory model $\strongC$ does, however, not satisfy rule \textsc{St-Other}\footnote{This rule describes the message passing ability: If a thread sees another thread's write, it will also see that thread's earlier writes.
}. 
In the liveness proof of program \texttt{Waiting}, this rule is used to show that the invariant $\sees{\ctid{2}}{[\gsignal=0];[\glock=1]}$ (which guarantees the order in which $\ctid{2}$ can read something) holds throughout the entire program, especially after executing $\pcl_1$. 
Under $\strongC$, Thread $\ctid{2}$ of \texttt{Waiting} can see the write to $\gsignal$ before it sees the write to $\glock$ and hence might not terminate,  namely when reading $\glock$ to be 0, $\gsignal$ to be 1 and then not seeing any changes to $\gsignal$ anymore. 
Since the liveness proof of the Ticket Lock algorithm does not need the \textsc{St-Other} rule, it is valid for \strongC. 

For \SC and \TSO, on the other hand, both proofs are valid.

\section{Conclusions and Related Work} \label{sec:conclusion} 
\vspace*{-5pt}
In this paper, we have proposed a method for deductively verifying liveness properties of concurrent programs on weak memory models. Our method includes a program logic and associated {\em syntactic} proof rules, thereby making reasoning generic and proofs valid for multiple memory models, namely for all models in which the employed proof rules are sound.

As future work, we intend to look into other memory models, e.g., Partial Store Ordering.  Furthermore, we will study strong fairness (compassion in the terminology of~\cite{DBLP:conf/birthday/MannaP10}) and Manna and Pnueli's rules for response under justice and compassion.


\paragraph{Safety in weak memory.} 
A number of works have proposed program logics and proof calculi for reasoning about concurrent programs on weak memory  with respect to safety properties (e.g.,~\cite{DBLP:conf/icalp/LahavV15,DBLP:conf/ecoop/KaiserDDLV17,ecoop20,DBLP:conf/esop/BilaDLRW22,DBLP:conf/cav/LahavDW23,FestschriftJones,DBLP:conf/fm/BargmannDW24}). Some of them also use \pico~\cite{DBLP:conf/cav/LahavDW23,DBLP:conf/fm/BargmannDW24,FestschriftJones}. For example, \cite{FestschriftJones} have also employed a notion of ``latest value". 

\paragraph{Liveness for SC.} Numerous works have studied general liveness properties or total correctness for concurrent programs wrt.~sequential consistency. 
Owicki and Gries~\cite{DBLP:journals/acta/OwickiG76} were the first to introduce termination proofs within Hoare-like proof calculi, later corrected by  
Apt et al.~\cite{Apt1990}.  Similar proof calculi for total correctness are also studied within rely-guarantee reasoning~\cite{DBLP:journals/toplas/Jones83,DBLP:journals/fac/XuRH97}. Cook et al.~\cite{DBLP:conf/pldi/CookPR07} propose automatic techniques for proving termination by incrementally constructing and eliminating possible interferences by threads. 

These approaches only provide a proof strategy for total correctness where the threads {\em in isolation} terminate (plus their termination proofs fulfill non-interference). Manna and Pnueli~\cite{DBLP:journals/tcs/MannaP91,DBLP:conf/birthday/MannaP10} provide temporal logic proof calculi for general response properties based on fairness. 
Podelski et al.~\cite{DBLP:journals/toplas/PodelskiR07} also employ fairness and provide a technique that constructs abstract transition programs on which fair termination can be checked. Concurrent separation logic was used in an automatic technique by D'Osualdo et al.~\cite{DBLP:journals/toplas/DOsualdoSFG21}.

\paragraph{Liveness in weak memory.} Only a few works have considered progress properties for concurrent programs on weak memory. The only one based on proof calculi is from Colvin et al.~\cite{DBLP:series/lncs/ColvinHHHMS24}. They use rely-guarantee reasoning on weak memory, which, as mentioned above, can only prove thread-wise termination.   
For decidability, Wang et al.~\cite{DBLP:conf/setta/WangPLLL22} prove the liveness properties lock-freedom, wait-freedom, deadlock-freedom, and starvation-freedom to be undecidable for TSO.
Lahav et al.~introduce the notion of memory fairness~\cite{DBLP:journals/pacmpl/LahavNOPV21} and provide a theorem (Theorem 5.3) allowing to reduce the checks for (the absence of) infinite computation to executions in which read statements finally read from writes being maximal wrt.~the modification order in execution graphs.  The tool Dartagnan~\cite{DBLP:conf/tacas/LeonFH020,DBLP:conf/tacas/LeonH021} performs termination checks for axiomatic memory models based on insights of~\cite{DBLP:journals/pacmpl/LahavNOPV21}. In \cite{DBLP:journals/pacmpl/HaasMLG26}, it is extended to prove non-termination for any given fairness assumptions. 
Abdulla et al.~\cite{DBLP:conf/cav/AbdullaAGKV23,DBLP:conf/birthday/AbdullaAGKV24} develop fairness notions and show 
that liveness can be reduced to control state reachability under their notion of fairness.

\bibliographystyle{plainurl}
\bibliography{sample}

\appendix
\section{Program Semantics} 
\label{app:opsem}

In this section, we give the missing rules of the local semantics of \cref{sec:syntax}. We start with a rule that lets us derive local state transitions from the command rules in \cref{fig:pcmdsem}.

\begin{mathpar}
  \inferrule*{
  \pcmd \gg \regstore \astep{\lab_\varepsilon} \regstore' \\\\ \pcmap'=\pcmap [\tau \mapsto \npc(\pc)]
}
{\tup{\pc: \pcmd; \cmd,(\regstore,\pcmap)} \cmdsteppc{\tid}{\lab_\varepsilon}{\pc}\tup{\cmd,(\regstore',\pcmap')}}
\end{mathpar}

Additionally, we need rules to describe if and while statements within the local semantics.

\begin{mathpar}
\inferrule*{
  \regstore(e) = \true \\ \pcmap'=\pcmap[\tid \mapsto \npcT(\pc)]
  }{ \tup{\pc: \ite{\exp}{\cmd_1}{\cmd_2}, (\regstore,\pcmap)} \cmdsteppc{\tid}{\lab_\varepsilon}{\pc} \tup{\cmd_1, (\regstore,\pcmap')}}
  \\
  \inferrule*{\regstore(e) = \false \\ \pcmap'=\pcmap[\tid \mapsto \npcF(\pc)]}{
  \tup{\pc: \ite{\exp}{\cmd_1}{\cmd_2}, (\regstore,\pcmap)} \cmdsteppc{\tid}{\lab_\varepsilon}{\pc} \tup{\cmd_2, (\regstore,\pcmap')}
  } 
\\ 
\inferrule*{
  \regstore(e) = \true \\ \pcmap'=\pcmap[\tid \mapsto \npcT(\pc)]
  }{ \tup{\pc: \while{\exp}{\cmd_1}, (\regstore,\pcmap)} \cmdsteppc{\tid}{\lab_\varepsilon}{\pc} \tup{ \cmd_1, (\regstore,\pcmap')}}
  \\
  \inferrule*{
  \regstore(e) = \false \\ \pcmap'=\pcmap[\tid \mapsto \npcF(\pc)]
  }{ \tup{\pc: \while{\exp}{\cmd_1}, (\regstore,\pcmap)} \cmdsteppc{\tid}{\lab_\varepsilon}{\pc} \tup{\skipc, (\regstore,\pcmap')}}

\end{mathpar}

Here, $\npcT$ and $\npcF$ give the next program position following the true and false branch, respectively,  and $\lab_\varepsilon \in \Lab \cup \{\varepsilon\}$. 
Finally, the last rule lifts local state transitions of {\em sequential} programs to {\em concurrent}  programs. 

\begin{mathpar}
\inferrule*{
  \tup{\cmdmap(\tid),(\regstore,\pcmap)} \cmdsteppc{\tid}{\lab_\varepsilon} {\pc}  \tup{\cmd',(\regstore',\pcmap')}
}
{\tup{\cmdmap,(\regstore,\pcmap)} \cmdsteppc{\tid}{\lab_\varepsilon}{\pc}  \tup{\cmdmap[\tid \mapsto \cmd'],(\regstore',\pcmap')}}
\end{mathpar}

\section{Strong Coherence} \label{app:strongcoh}

In this section, we look at the Strong-Coherence memory model (\strongC). As for \RA, we first restate the operational semantics for \strongC given in \cite{DBLP:journals/pacmpl/LahavNOPV21}.

\noindent \textbf{Operational semantics.}
The semantics of \strongC is very similar to those of \RA. The only difference is in the propagation transition. 
Hence, the set of memory-model internal labels is $\strongC.\lTheta\defeq\RA.\lTheta=\{\prop\}$, the set of states $\strongC.\lQ\defeq\RA.\lQ=\mathcal{P}(\Loc\times\Val\times\N\times\Tid\times\View)\times(\Tid\to\View)$ and the initial state is $\strongC.\lQ_0\defeq\RA.\lQ_0=\tup{\memra_0,\tsra_0}$.

\begin{figure}[t]
\begin{mathpar}
 \inferrule*{ \lab=\prop \\   \mra\in\statera.\memra \\  \exists\tid\in\Tid:\tsra(\tid)(\mra.\lLOC)<\mra.\lTS 
   }
   {\statera\asteplab{\lab}{\strongC} \tup{\lz{\statera.\memra},\lo{\statera.\tsra [\tid \mapsto \statera.\tsra(\tid)[\mra.\lLOC\mapsto\mra.\lTS]]}} } 
\end{mathpar}
\vspace{-10pt}\caption{Propagation transition in the operational semantics of \strongC using colors to highlight the updated \lz{memory} and \lo{view} components 
}
\label{fig:strongC}
\end{figure}

The transition relation $\asteplab{}{\strongC}$ differs from that of \RA only in the propagation transition given in Fig.~\ref{fig:strongC}. In there, the view of a thread could be updated for one location per time, leading to the more relaxed behavior of \strongC compared to \RA.

\noindent \textbf{Potentials of \strongC States.}
Since $\strongC.\lQ=\RA.\lQ$, we can use the same lifting map for $\RA$ and $\strongC$: 
\begin{align*}
    \map_\strongC(\statera)(\tid)\defeq \Delta_\tid(\statera)\cdot \bigcup_{ \statera' \text{ s.t. } \statera\asteplab{\prop}{\strongC} \statera' } \map_\strongC(\statera')(\tid)
\end{align*}
Because of the different propagation transitions, we still get different potentials during the execution of a program. Therefore, not all \pico rules that are sound for \RA still hold for \strongC.
\begin{theorem}\label{th:soundStrongC}
    All \pico proof rules in Fig.~\ref{fig:rules} except for \textsc{St-Other} are sound for \strongC. 
\end{theorem}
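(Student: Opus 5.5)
We would prove \cref{th:soundStrongC} rule by rule, reusing as much of the \RA soundness argument (\cref{th:sound-ra}) as possible. \strongC and \RA share states, initial states, and read, write and RMW transitions, and $\map_\strongC$ has the same shape as $\map_\RA$. So the first potential store $\Delta_\tid(\statera)$ of every list is literally the same function. Only the set of continuations differs, since it is generated by the per-location propagation of \cref{fig:strongC}.

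We would begin with a structural lemma about $\map_\strongC$. Each list in $\map_\strongC(\statera)(\tid)$ corresponds to a sequence of propagations. Every propagation of thread $\tid$ raises $\tid$'s view on exactly one location to a strictly larger timestamp and leaves all other threads' views unchanged. Propagations of other threads $\tida \neq \tid$ can also contribute to the union, but they leave $\Delta_\tid$ unchanged except possibly for covered flags.

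This yields a per-location characterisation. For every location $\loc$, the sequence of values of $\loc$ seen along a list is a monotone walk through the messages of $\memra_{|\loc}$, starting at timestamp $\tsra(\tid)(\loc)$ and ending at the maximal timestamp. Different locations advance independently. Conversely, every such interleaving of per-location advances is realised.

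With this lemma, the remaining rules are checked as follows.

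\textbf{Stability rules.} For \textsc{Stbl-St}, \textsc{Stbl-Rmw} and \textsc{Stbl-Ld}, a load changes neither memory nor views. A store or RMW to $\loc$ adds a message on $\loc$ only, so any assertion not mentioning $\loc$ is unaffected. We would need one extra check for non-negated view maximality $\vmax{\loca}{\tid}$ under a load, which is trivial, and for covered flags, which we would argue exactly as for \RA.

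\textbf{\textsc{Sub} and \textsc{Pc}.} These only touch $\regstore$ and $\pcmap$.

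\textbf{\textsc{St}.} From $\vmax{\loc}{\tid}$, the thread views the maximal $\loc$-message. The write then takes a fresh larger timestamp, which is maximal, and the thread's view on $\loc$ becomes that timestamp. Hence every list of $\tid$ sees only the new value on $\loc$, and $\curr{\loc}=\exp$.

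\textbf{\textsc{Rmw}.} The argument is as in \RA. $\cov{\loc}$ forces the read message to be maximal, or to have no successor gap, and the timestamp $+1$ makes the written message the new maximum.

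\textbf{\textsc{Ld}.} View maximality means the read message is the last one, so $\reg$ receives $\curr{\loc}$.

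\textbf{\textsc{Ld-Sh}.} The read returns either the value in the first store of some list, so $\exp$ holds, or it returns a value while the remaining potential still satisfies $\inter$. The second case is discharged by the premise $P$. We would argue this by decomposing along the structural lemma, essentially verbatim from the \RA case.

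The main obstacle is making sure that no step silently used the \RA property that propagation joins whole message views. In \RA that property is exactly what validates \textsc{St-Other}, and the theorem excludes that rule for this reason. We expect this to be delicate in \textsc{Ld-Sh}, where the chop decomposition $[\exp(\reg:=\loc)];\inter$ must be matched against the post-read potential.

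Our plan there is to show that the post-read potential of $\tid$ is a subset of the suffixes of lists of the pre-state. Reading changes no view, so the post-state equals the pre-state, and the potential is in fact literally the same. The rule is then immediate from the premise.

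We would finally remark why \textsc{St-Other} fails, which is not needed for the proof itself. A per-location propagation lets $\tida$ see $\tid$'s new write to $\loc$ before $\tid$'s earlier writes, and this breaks the ordering demanded by $\inter\chop\inter_\tid$. This matches the counterexample discussed for \texttt{Waiting}.
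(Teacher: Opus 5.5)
Your proposal is correct and follows essentially the same route as the paper. The paper also checks the rules one by one by transferring the \RA arguments: stability from the locality of stores, \textsc{St} and \textsc{Rmw} from fresh maximal timestamps and coveredness, and \textsc{Ld-Sh} from the case split between the common first store satisfying $\exp(\reg:=\loc)$ and $\sees{\tid}{\inter}$, together with the premise $P$. Your explicit per-location characterisation of $\map_\strongC$, and your observation that a load leaves the memory state unchanged (which matches \cref{fig:RA} better than the paper's remark that reads advance the view), make that same argument more precise.
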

The lack of soundness of \textsc{St-Other} is the reason why the proof in \cref{sec:wait} does not apply to \strongC. In order to prove liveness for \strongC with \pico in general, we need the following lemma.
\begin{lemma} \label{th:sound-dist-strongC}
    The lifting function $\map_\strongC$ guarantees view advancement of $\ strongC$'s internal transitions.
\end{lemma}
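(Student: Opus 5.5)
To establish \cref{th:sound-dist-strongC} we verify the two conditions of \cref{def:picointernal} for $\mlab=\prop$ under the \strongC propagation rule of \cref{fig:strongC}. We proceed exactly as for \cref{th:sound-dist}. Where possible we reuse its argument, since states, initial states and the lifting function have the same shape. Only the propagation step and the recursive unfolding of $\map_\strongC$ differ.

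\emph{Condition 1 (enabledness).} Suppose $\tup{(\regstore,\pcmap),\map_\strongC(\statera)}\models\dist(\tid,\loc)=n$ with $n\neq 0$. By the definition of $\dist$, some list $\L\in\map_\strongC(\statera)(\tid)$ has a store whose $\loc$-entry differs from the last one. In particular, the first store $\Delta_\tid(\statera)(\loc)$ differs from the last. The last store records the message of maximal timestamp in $\statera.\memra_{|\loc}$: the lists are built by propagations that only raise views, and each chain ends where no propagation is possible, i.e.\ where the views are maximal. Hence there is a message $\mra\in\statera.\memra_{|\loc}$ with $\statera.\tsra(\tid)(\loc)<\mra.\lTS$, which is exactly the premise of the \strongC $\prop$ rule. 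So $\En(\prop)$ holds.

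\emph{Condition 2 (soundness of \textsc{Stbl-Int} and \textsc{Dist}).} The key structural fact is the following: if $\statera\asteplab{\prop}{\strongC}\statera'$, then $\map_\strongC(\statera')(\tida)\subseteq$ the set of suffixes of lists in $\map_\strongC(\statera)(\tida)$ for every thread $\tida$, and $\map_\strongC(\statera')(\tida)=\map_\strongC(\statera)(\tida)$ for every $\tida$ other than the propagated thread $\tid$.
\begin{itemize}
\item The equality for $\tida\neq\tid$ holds because $\Delta_\tida$ depends only on $\tsra(\tida)$ and $\memra$, neither of which changes. The only subtlety is $cov$, which depends only on $\memra$, so it is unchanged as well.
\item For $\tid$ itself, $\Delta_\tid(\statera')\cdot\map_\strongC(\ldots)$ is literally one of the branches in the union defining $\map_\strongC(\statera)(\tid)$. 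This is proved by induction on the well-founded measure $\sum_{\tid,\loc}(\max\text{-ts}(\loc)-\tsra(\tid)(\loc))$, which strictly decreases under $\prop$ and makes the recursive definition well-founded.
\end{itemize}
With this fact, \textsc{Stbl-Int} follows. Every interval assertion $\sees{\tid}{\inter}$ is preserved by removing list prefixes only for those shapes that occur, and the same holds for $\curr{\loc}$, $\cov{\loc}$ (flags unchanged), register and pc atoms, and $\dist(\cdot,\cdot)=0$ (view maximality). A maximal thread is never the target of $\prop$, and suffixes of constant lists stay constant. Positive $\dist$ atoms are excluded by the side condition.

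I expect \textsc{Dist} to be the main obstacle, because the precondition does not name the thread that $\prop$ advances. Suppose $\dist(\tid,\loc)=n>0$. The propagated thread may be a different thread $\tida$, or it may be $\tid$ but on a location $\loca\neq\loc$. In those cases $\dist(\tid,\loc)$ does not strictly decrease, so the rule's postcondition $m<n$ fails literally. I would therefore read the rule the same way the proof of \cref{th:sound-dist} does. Either the label $\mlab$ is understood as indexed by the thread and location it advances ($\prop$ with $\mra.\lLOC=\loc$ for thread $\tid$), matching how $\tr_4[\ctid{\tidc}]$ is used, or the appendix proof for \RA fixes such a convention.

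Under that reading, the \strongC step sets $\tsra(\tid)(\loc)$ to $\mra.\lTS>\tsra(\tid)(\loc)$ and leaves $\tid$'s other locations untouched. Each list of $\map_\strongC(\statera')(\tid)$ is then a proper suffix of a list of $\map_\strongC(\statera)(\tid)$ that drops the first store, and at least the $\loc$-entry of that store was non-latest. Hence the number of stores whose $\loc$-entry differs from the last strictly decreases in every list. Since the maximum over all lists is attained on a list that previously had count at most $n$, we get $m<n$.

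The \strongC case is actually simpler than \RA here, since only one location's view moves. So the inductive suffix argument goes through without the message-view join.
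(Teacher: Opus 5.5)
Your argument is essentially the paper's: enabledness because some $\loc$-message lies beyond $\tid$'s view, \textsc{Stbl-Int} because a $\prop$ step only strips leading stores from the potential lists, and \textsc{Dist} because the stripped store carried a non-latest $\loc$-entry. Your caveat about \textsc{Dist} is justified. The paper's proof just asserts that each list in $\tid$'s potential loses an entry and concludes a strict decrease. That inference fails when the propagation concerns another thread or, in \strongC{}, $\tid$ on another location, since $\dist(\tid,\loc)$ is then unchanged. So your reading of $\mlab$ as a propagation of $\tid$ on $\loc$ is what the rule needs, and it matches how $\mlab$ is used as a helpful transition in the \texttt{Waiting} and \texttt{Ticket} proofs.

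One harmless slip: $\map_\strongC(\statera')(\tida)=\map_\strongC(\statera)(\tida)$ for $\tida\neq\tid$ is false in general. The union in the lifting ranges over propagations of all threads, so the lists contain stuttering copies of $\Delta_\tida(\statera)$ whose number depends on the other threads' remaining steps. You do not need this equality, though: the tail inclusion you also state suffices, because interval assertions are closed under taking suffixes.
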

In the next section, we prove all soundness results for \RA and \strongC.

\section{Soundness Proofs of \RA and \strongC }

Next, we provide the proofs of \cref{th:sound-ra,th:soundStrongC} stating the soundness of the rules in \cref{fig:rules} for memory models \RA and \strongC. 
We start with two rules that always hold for every memory model. 
\begin{lemma}
\label{lem}
    The rules \textsc{Pc} and \textsc{Sub} are sound for every memory model. 
\end{lemma}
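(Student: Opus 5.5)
The plan is to argue directly from \cref{def:soundness}, using the fact that the two rules only touch the \emph{local} components $(\regstore,\pcmap)$ of a combined state and never the memory state $\memstate$. Fix an arbitrary memory model $\M$, a lifting function $\map_\M$, and a transition $\state=\tup{\cmdmap,(\regstore,\pcmap),\memstate}\astep{\tid,lab(\pcmd),\pc}\state'=\tup{\cmdmap',(\regstore',\pcmap'),\memstate'}$ of $\cs{\M}$.

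\textbf{Rule \textsc{Sub}.} An assignment $\assignInst{\reg}{\exp}$ produces the label $\epsl$. Such a step can only come from rule (2) of the combined semantics, so $\memstate'=\memstate$ and hence $\map_\M(\memstate')=\map_\M(\memstate)$. By \cref{fig:pcmdsem} and the appendix rules we also have $\regstore'=\regstore[\reg\mapsto\regstore(\exp)]$ and $\pcmap'=\pcmap[\tid\mapsto\npc(\pc)]$. I would then prove the usual substitution lemma by structural induction on $\varphi$: if $\at{\pc}\notin\varphi$, then
\[
\tup{(\regstore,\pcmap),\DD}\models\varphi(\reg:=\exp)\iff\tup{(\regstore[\reg\mapsto\regstore(\exp)],\pcmap'),\DD}\models\varphi .
\]
The cases follow \cref{def:pico}.
\begin{itemize}
\item For extended expressions, registers are evaluated through $\regstore$ uniformly inside every potential store $\potstore$ and list $\L$. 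So $\sem{\eexp(\reg:=\exp)}_{\tup{(\regstore,\pcmap),\potstore,\L}}=\sem{\eexp}_{\tup{(\regstore',\pcmap),\potstore,\L}}$, and this lifts through $[\cdot]$, $\chop$ and $\sees{\tid}{\cdot}$.
\item $\cov{\loc}$ and $\dist$ do not depend on $\regstore$ or $\pcmap$, so they are unaffected.
\item The $\at{\pc'}$ atoms occurring in $\varphi$ differ from $\at{\pc}$. The only entry of $\pcmap$ that changes is that of $\tid$, moving from $\pc$ to $\npc(\pc)$. A remaining $\at{\npc(\pc)}$ atom would change its truth value. I would either exclude this by reading the side condition as excluding both $\pc$ and the successor position, or check it against the paper's intended usage.
\end{itemize}
This $\pcmap$ subtlety is the main point needing care. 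Otherwise the case is routine.

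\textbf{Rule \textsc{Pc}.} For any command, whether by rule (1) or rule (2), the local step sets $\pcmap'(\tid)$ to the successor position given by $\npc$ (or $\npcT/\npcF$ for branches, which the paper's $\npc$ notation is taken to abbreviate). Therefore $\pcmap'(\tid)=\npc(\pc)$, and $\state'\models\at{\npc(\pc)}$ by item 6 of \cref{def:pico}, regardless of $\memstate'$. Because neither argument uses any property of $\M$ or of $\map_\M$, both rules are sound for every memory model.
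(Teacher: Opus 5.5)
Your proposal is correct and follows the same idea as the paper's brief argument: both rules only involve the local components, \textsc{Sub} is a rule-(2) step that leaves $\memstate$ and hence $\map_\M(\memstate)$ unchanged, and \textsc{Pc} just sets $\pcmap(\tid)$ to the successor position. Your treatment is more careful than the paper's, which says the assignment ``only changes the value of $\regstore$'' and overlooks the $\pcmap$ update; the subtlety you flag is real, since e.g.\ $\varphi=\neg\at{\npc(\pc)}$ satisfies the literal side condition yet breaks \textsc{Sub}, so the side condition must indeed be strengthened to exclude (negative occurrences of) $\at{\npc(\pc)}$, as you propose.
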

For \textsc{Pc} this is the case, since the assertion $\at{\pc}$ only depends on the program semantics, it holds independently of the model, and after executing a transition with program counter $\pc$ in any case the next one is $\npc(\pc)$. For \textsc{Sub}, the transition only depends on the program semantics. It only changes the value of $\regstore$ and the model-dependent potential state stays the same. 
The remaining rules of Fig.~\ref{fig:rules} depend on the respective memory model. 
\smallskip

\subsection{Proof of Theorem~\ref{th:sound-ra}}
For the proof of Theorem~\ref{th:sound-ra}, we show the soundness of the remaining rules of Fig.~\ref{fig:rules} not mentioned in Lemma~\ref{lem} for \RA.

\begin{proof}
Let $\tup{\mathcal{C},(\regstore,\pcmap),\statera}\in\lQ$ with an \RA state $\statera=\tup{\memra,\tsra}$.

\paragraph{{\sc Stbl-St}.} 
Besides updating the program counter, a write transition in \RA adds a new message to $\memra$ and advances the view thread $\tid$ has on $\loc$. This only has an influence on assertions containing $\loc$ or program counters.

\paragraph{{\sc Stbl-Ld}.}
Besides updating $\pcmap$ and the value of $\reg$ in $\regstore$, a read transition in \RA only advances the view of thread $\tid$. For $\loca\in \Loc$, this exclusively has an influence on assertions of the form $\neg(\vmax{\loca}{\tid})$ and the ones containing $\reg$ or program counters. 

\paragraph{{\sc Stbl-Rmw}.}
Since a fetch-and-add transition is only a special case of a composition of a load transition and a store transition, the soundness of \textsc{Stable-RMW} follows from the soundness of the two previous rules.

\paragraph{{\sc St}.}   
 For $\tup{(\regstore,\pcmap),map_\RA(\statera)}\models \vmax{\loc}{\tid}$, we know thread $\tid$ has the highest view on $\loc$. After a new write to $\loc$, that write gets a new, higher timestamp, and $\tid$'s view gets updated. Hence, $\tid$ is still view maximal and can only see the last written value for $\loc$. Therefore, the \textsc{St} rule is sound. 

\paragraph{{\sc St-Other}.}   
Let $\tup{(\regstore,\pcmap),map_\RA(\statera)}\models \sees{\tid}{\inter_\tid}   \wedge  \sees{\tida}{\inter \chop \inter_\tid}$ with $\tid\neq\tida$ and $\loc\notin\fv(\inter_\tid)$. Assume $\Loc_\tid=\fv(\inter_\tid)$. After $\tid$ writes something to $\loc$ we add a new message $m_\loc$ to $\statera.\memra$ with $m.\vra(y)=\tsra(\tid)(y)$ for all $y\in \fv(\inter_\tid)$. Since a propagate of $\tida$ after the write updates the view on all variables, $\tida$ cannot see $m_\loc$ before it sees $\inter_\tid$. Hence, the new state still satisfies $\sees{\tida}{\inter;\inter_\tid}$.

\paragraph{{\sc Rmw}.}   
For the \textsc{Rmw} rule, $\tup{(\regstore,\pcmap),map_\RA(\statera)}\models \cov{\loc}\land \curr{\loc}=\exp_1$ holds. In an \RA state, this means that all messages $\mra\in\memra_{|\loc}$ either have the highest timestamp and $\mra.\lVAL=\exp_1$ or there is a message $\mra'$ with $\mra'.\lTS=\mra.\lTS+1$. An RMW in \RA can then only read from the latest message $\mra$ with $\mra.\lVAL=\exp_1$ and will write a new message $\mra'$ with $\mra'.\lVAL=\exp_1+\exp_2$ and $\mra'.\lTS=\mra.\lTS+1$. The view of $\tid$ for $\loc$ will be updated, and hence, all messages for $\loc$ are covered, and $\tid$ only sees the latest one.

\paragraph{{\sc Ld}.}   
The \textsc{Ld} rule tells us that if $\tid$ only sees the latest message for $\loc$, then it can only read from that message. This is true for \RA, and therefore, this rule is also sound. 

\paragraph{{\sc Ld-Sh}.}   
For the \textsc{Ld-Sh} rule, $\tup{(\regstore,\pcmap),map_\RA(\statera)}\models \sees{\tid}{[\exp(\reg:=\loc)];\inter}$ holds. For \RA, this means $\tid$ currently ether sees $\exp(\reg:=\loc)$ or $\sees{\tid}{\inter}$ holds. After the load, in the first case $e$ holds analogous to rule \textsc{Ld}. Otherwise, $\psi$ holds because of the rule's assumption.

\end{proof}

\subsection{Proof of Theorem~\ref{th:soundStrongC}}
For the proof of Theorem~\ref{th:soundStrongC}, we show the soundness of the remaining rules of Fig.~\ref{fig:rules} not mentioned in Lemma~\ref{lem} for \strongC. \textsc{St-Other} is not sound for \strongC. Note that since the semantics of \RA and \strongC are similar, most of the soundness proofs also are. 

\begin{proof}
Let $\tup{\mathcal{C},(\regstore,\pcmap),\statera}\in\lQ$ with an \strongC state $\statera=\tup{\memra,\tsra}$.

\paragraph{{\sc Stbl-St}.} 
Besides updating the program counter, a write transition in \strongC adds a new message to $\memra$ and advances the view thread $\tid$ has on $\loc$. This only has an influence on assertions containing $\loc$ or program counters.

\paragraph{{\sc Stbl-Ld}.}
Besides updating $\pcmap$ and the value of $\reg$ in $\regstore$, a read transition in \strongC only advances the view of thread $\tid$. For $\loca\in \Loc$, this exclusively has an influence on assertions of the form $\neg(\vmax{\loca}{\tid})$ and the ones containing $\reg$ or program counters. 

\paragraph{{\sc Stbl-Rmw}.}
Since a fetch-and-add transition is only a special case of a composition of a load transition and a store transition, the soundness of \textsc{Stable-RMW} follows from the soundness of the two previous rules.
    
\paragraph{{\sc St}.}   
 For $\tup{(\regstore,\pcmap),map_\strongC(\statera)}\models \vmax{\loc}{\tid}$, we know thread $\tid$ has the highest view on $\loc$. After a new write to $\loc$, that write gets a new, higher timestamp, and $\tid$'s view gets updated. Hence, $\tid$ is still view maximal and can only see the last written value for $\loc$. Therefore, the \textsc{St} rule is sound.

\paragraph{{\sc Rmw}.}   
For the \textsc{Rmw} rule, $\tup{(\regstore,\pcmap),map_\strongC(\statera)}\models \cov{\loc}\land \curr{\loc}=\exp_1$ holds. In a \strongC state, this means that all messages $\mra\in\memra_{|\loc}$, either have the highest timestamp and $\mra.\lVAL=\exp_1$ or there is a message $\mra'$ with $\mra'.\lTS=\mra.\lTS+1$. An RMW in \strongC can then only read from the latest message $\mra$ with $\mra.\lVAL=\exp_1$ and will write a new message $\mra'$ with $\mra'.\lVAL=\exp_1+\exp_2$ and $\mra'.\lTS=\mra.\lTS+1$. The view of $\tid$ for $\loc$ will be updated, and hence, all messages for $\loc$ are covered, and $\tid$ only sees the latest one. 

\paragraph{{\sc Ld}.}   
The \textsc{Ld} rule tells us that if $\tid$ only sees the latest message for $\loc$, then it can only read from that message. This is true for \strongC, and therefore this rule is also sound. 

\paragraph{{\sc Ld-Sh}.}   
For the \textsc{Ld-Sh} rule, $\tup{(\regstore,\pcmap),map_\strongC(\statera)}\models \sees{\tid}{[\exp(\reg:=\loc)];\inter}$ holds. For \strongC, this means $\tid$ currently ether sees $\exp(\reg:=\loc)$ or $\sees{\tid}{\inter}$ holds. After the load, in the first case $e$ holds analogous to rule \textsc{Ld}. Otherwise, $\psi$ holds because of the rule's assumption.

\end{proof}

\subsection{Proofs of Lemma~\ref{th:sound-dist} and Lemma~\ref{th:sound-dist-strongC}}

Next, we prove that $\map_\RA$ and $\map_\strongC$ guarantee view advancement. Because of the similarities between the two memory models, we do this in one proof. 
\begin{proof}
Assume we are in a state $\tup{(\regstore,\pcmap),\map_\M(q)}$ with $\M\in\{\RA,\strongC\}$.
\begin{enumerate}
    \item Let
$$\exists\tid\exists\loc:\dist(\tid,\loc)\neq 0$$
This means that at least one store list in $\DD(\tid)$ has at least two different potential stores. 
Since for both memory models the lifting functions are defined via memory internal steps $\mlab\in \M.\lTheta$, $q\models \En(\mlab)$.
\item \textsc{Stbl-Int}: By definition of $map_\RA(\statera)$, a propagate transition leads to a new state $map_\M(\statera')$ which only differs by all potentials losing some of their first entries. This view advancement can only cause assertions of the type $\neg(\vmax{\loc}{\tid})$ to no longer be valid. \\
\textsc{Dist}: 
Assuming 
$\dist(\tid,\loc)(\map_\M(q))=n>0$ for a thread $\tid$ and a location $\loc$.
For \RA and \strongC, this means the maxima of entries in lists in $\tid$'s potential that are different from their last entry is $n$. 
        After executing the next propagate, depending on the chosen message, each list in $\tid$'s potential loses at least one entry.
        Hence, the maxima of entries in those lists 
        that are different from their last entry is strictly less than $n$ and $\dist(\tid,\loc)(\map_\M(q'))<n$. 
\end{enumerate}

\end{proof}

\end{document}